\pgfplotsset{compat=1.16}
\newtheorem{dummy}{Dummy}[section]              
\newtheorem{proposition}[dummy]{Proposition}
\Crefname{proposition}{Proposition}{Propositions}
\Crefname{lemma}{Lemma}{Lemmas}
\newtheorem{theorem}[dummy]{Theorem}
\Crefname{theorem}{Theorem}{Theorems}
\theoremstyle{definition}
\newtheorem{example}[dummy]{Example}
\newtheorem{remark}[dummy]{Remark}
\newcommand*{\PoW}{{PoW}}
\newcommand*{\PPS}{{PPS}}
\newcommand*{\FPPS}{{FPPS}}
\newcommand \E{\mathbb{E}}
\newcommand{\vh}{\widehat{V}}
\newcommand{\ph}{\widehat{\psi}}
\newcolumntype{Y}{>{\centering\arraybackslash}X}
\newcolumntype{C}[1]{>{\centering\let\newline\\\arraybackslash\hspace{0pt}}p{#1}}
\title{Blockchain mining in pools: Analyzing the trade-off between profitability and ruin}
\author{
	Hansj\"{o}rg Albrecher\footnote{The Faculty of Business and Economics, University of Lausanne and Swiss Finance Institute, Quartier UNIL-Chamberonne B\^{a}timent Extranef, 1015 Lausanne, Switzerland, \texttt{hansjoerg.albrecher@unil.ch}}
		\and
Dina Finger\footnote{The Faculty of Business and Economics, University of Lausanne, Quartier UNIL-Chamberonne B\^{a}timent Extranef, 1015 Lausanne, Switzerland, \texttt{dina.finger@unil.ch}}
	\and
	Pierre-O. Goffard\footnote{Laboratoire de Sciences Actuarielle et Financière EA2429, Université Claude Bernard Lyon 1, Universit\'e de Lyon. Institut de Science Financière et d'Assurances, 50 Avenue Tony Garnier, 69007 Lyon, France, \texttt{pierre-olivier.goffard@univ-lyon1.fr}}
}
\date{}
\begin{document}

\maketitle

\begin{abstract}
The resource-consuming mining of blocks on a blockchain equipped with a proof of work consensus protocol bears the risk of ruin, namely when the operational costs for the mining exceed the received rewards. In this paper we investigate to what extent it is of interest to join a mining pool that reduces the variance of the return of a miner for a specified cost for participation. Using methodology from ruin theory and risk sharing in insurance, we quantitatively study the effects of pooling in this context and derive several explicit formulas for quantities of interest. The results are illustrated in numerical examples for parameters of practical relevance. 
\end{abstract}


\section{Introduction}\label{sec:introduction}
A blockchain is a decentralized data ledger maintained by a \textit{Peer-to-Peer} network. Blockchain users issue transactions to the network peers who agree on those to be recorded by following a consensus protocol. In public and permissionless blockchains, such as the one for Bitcoin, the consensus protocol is called \textit{Proof-of-Work} (\PoW). The nodes, referred to as miners, compete to solve a challenging cryptographic puzzle using some brute force search algorithm. The first miner to come up with a solution includes the pending transactions in a block and is rewarded with newly minted crypto-coins. This reward compensates the operational cost of mining mainly induced by the consumption of electricity. The \PoW\,protocol is designed to be incentive compatible in the sense that a miner is compensated proportionally to her computational effort. When the {Peer-to-Peer} network grows large, the share of the network computing power owned by a given miner shrinks, which in turn makes the gains infrequent. The constant operating costs therefore endanger the solvency of miners and has led them to join forces by forming mining pools.\\  

\noindent Mining pools grant miners a steadier income, as block finding rewards are collected more often. The earnings are then fairly distributed to the pool participants based on their contribution to the computational effort. The simplest way to do so consists in splitting the reward whenever a block is found. This is the \textit{proportional} reward system. More sophisticated reward schemes have been put together to increase the amount of risk transferred from the miners to the pool and to fill the gaps of the proportional system that we will discuss later. These more sophisticated systems require the supervision of a manager who undertakes part of the risk in exchange for a commission. An early work of Rosenfeld \cite{DBLP:journals/corr/abs-1112-4980} provides a detailed overview on mining pool reward systems, see also the recent survey of Zhu et al.\ \cite{Zhu_2018}. In practice, the individual contribution of a miner is measured through a \textit{share} submission process. A \textit{share} refers to an easier-to-find 'fake' solution to the crypto-puzzle that miners must send to the pool manager to prove their involvement (for instance, a solution to the crypto-puzzle with only $m$ instead of the $n>m$ leading zeroes required for the successful mining of a block). In this work we provide a risk analysis of \textit{Pay-per-Share} (\PPS) reward systems in which the pool manager pays for each share submitted by the miners. In that way the manager takes on much of the randomness associated to the mining activity, which is therefore very appealing to the participant. Using utility theory, Schrijvers et al.\ \cite{Schrijvers2017} showed that such systems are incentive-compatible for risk-averse miners. Both Rosenfeld \cite{DBLP:journals/corr/abs-1112-4980} and Zhu et al.\ \cite{Zhu_2018} stressed that a scheme of this kind must go hand in hand with a proper capital allocation strategy on the part of the manager to avoid ruin. The aim of this paper is to provide risk-analytic tools to inform the decision making process of miners and pool managers. This is achieved by taking an approach inspired from insurance risk theory.\\

\noindent The wealth of miners and pool managers is modelled via stochastic processes that take into account operational costs, pool participation fees and the rewards for {share} and block finding. The resulting processes are similar to those appearing in the surplus modelling for insurance companies which collect premiums continuously and have to pay loss reimbursements to policyholders in case a claim occurs. A standard risk measure in this context is the ruin probability defined as the probability that the wealth process falls below zero, see e.g.\ Asmussen and Albrecher \cite{asmussen2010ruin} for an overview. This analogy was already used in Albrecher and Goffard \cite{albrecher2020profitability}, where the opportunity for miners to deviate from the prescribed protocol by withholding blocks was investigated. A first result was also obtained there in relation to the advantage of joining a mining pool which applies the proportional system. Our objective in this paper is to considerably extend this line of thinking towards the \textit{Pay-per-Share} redistribution systems that are  more commonly used in practice. We will also consider a variant of the model in which the collected rewards are random variables. This assumption will enable the application of classical results from double-sided jumps in a risk reserve process for modelling insurance portfolios, see for example Albrecher at al. \cite{albrecher2010direct}, Labb\'e and Sendova \cite{labbe2009expected}. Incorporating random rewards allows us to account for the transactions fees and the exchange rate of cryptocurrencies to fiat ones. Transaction fees are included by blockchain users to entice the network to process their transactions, see Easley al.\ \cite{easley2019mining} and Kasahara and Kawahara \cite{Kasahara2019}. The redistribution of the revenue generated by the transaction fees among the pool participants also varies from one mining pool to another. \\

\noindent A major concern associated to mining pool formation is the centralization of the network. Cong et al. \cite{Cong2020} have explained that miners who direct their mining power to multiple small mining pools enjoy the same risk sharing benefits as miners that choose to join a single mining pool. Hence the intuition that a larger mining pool would grow even larger is misguided. Empirical data shows that the participation fees are greater in larger mining pools, which naturally slows down their growth. We aim at providing more insight on the risk of centralization in the light of our analysis.\\     

\noindent The remainder of the paper is organized as follows. Section \ref{sec:reward_systems} provides an overview of the existing reward systems and describes the  \textit{Pay-Per-Share} mechanism in more detail, as it will be the focus later on. Formulas for the ruin probability and expected surplus for the pool manager are derived for deterministic rewards in Section \ref{secd} and for randomized reward in Section \ref{secr}. Section \ref{sec:minerdet0} provides formulas from the individual miner's perspective. Section \ref{seci} is devoted to numerical illustrations where the sensitivity of the risk and performance indicator is analysed with respect to the model parameters. Section \ref{secc} concludes.

\section{Risk models and reward systems}\label{sec:reward_systems}
A risk model defines the wealth of some company or individual as a stochastic process
\[
R_t = u - C_t + B_t,\text{ }t\geq0. 
\]  
which corresponds to the income $(B_t)_{t\geq0}$ net of the expenses $(C_t)_{t\geq0}$. The surplus process $(R_t)_{t\geq0}$ starts at some initial level $R_0 = u>0$. We take a continuous time approach where $t\in\mathbb{R}_+$, and $(C_t)_{t\geq0}$ and $(B_t)_{t\geq0}$ define increasing functions or stochastic processes. A risk analysis is relevant only if at least one of the model components is random. The activity of the company is profitable if on average the earnings exceed the expenses, namely $\mathbb{E}(B_t)>\mathbb{E}(C_t)$. Even if the net profit condition holds, the variability of the process $(R_t)_{t\geq0}$ can lead to bankruptcy as it may become negative. Define the ruin time as 
$$
\tau_u = \inf\{t\geq0:\;R_t < 0\},
$$  
which corresponds to the first time at which the surplus goes below $0$. The risk of bankruptcy is classically assessed by computing the ruin probability up to time $t\geq0$ defined by 
\begin{equation}\label{eq:ruin_probability}
\psi(u, t) = \mathbb{P}(\tau_u \leq t).
\end{equation}
It is sometimes more convenient from a mathematical point of view to consider the infinite-time horizon by letting $t\rightarrow\infty$, and in that case we write $\psi(u):=\lim_{t\to\infty}\psi(u, t)$. Following the rationale developed in \cite{albrecher2020profitability}, we also consider a performance indicator defined as 
\begin{equation}\label{eq:expected_surplus}
V(u, t) = \mathbb{E}(R_t\mathbb{I}_{\tau_u >t} ),
\end{equation}
which corresponds to the expected surplus at time $t\geq0$ in case ruin did not occur until then.\\

\noindent Consider a network of $n$ miners, where miner $i\in\{1,\ldots, n\}$ owns a share $p_i\in(0,1)$ of the network hashpower, i.e.\ $\sum_{i=1}^{n}p_i=1$. If the number of blocks found by the network is governed by a homogeneous Poisson process $(N_t)_{t\geq0}$ with intensity $\lambda$, then the number of blocks found by miner $i$ is a (thinned) Poisson process $(N^i_t)_{t\geq0}$ with intensity $p_i\cdot \lambda$. Denote by $b>0$ the amount of the reward for finding a new block and assume that the cumulative operational cost is a linear function with slope $c_i>0$ which depends on the price of the electricity and the computing power of miner $i$. The surplus process of miner $i$ is then given by
\begin{equation}\label{eq:miner_surplus_solo_mining}
R^i_t = u - c_i \cdot t + N^i_t\cdot b,\text{ }t\geq0. 
\end{equation}  
Model \eqref{eq:miner_surplus_solo_mining} has been considered by Albrecher and Goffard \cite{albrecher2020profitability}, and formulas for both the finite-time ruin probability \eqref{eq:ruin_probability} and the expected surplus  \eqref{eq:expected_surplus} were derived. To make the formulas more amenable for numerical evaluation, the authors then decided to approximate the fixed time horizon $t\geq0$ by an exponential random variable $T\sim\text{Exp}(t)$ with mean $t\geq 0$, resulting in tractable expressions for 
\begin{equation}\label{eq:rp_ep_proxy}
\ph(u,t)=\E[\psi(u,T)],\text{ and } \vh(u,t):= \E[V(u,T)],
\end{equation}
which were then used to carry out a numerical analysis.\\

\noindent Consider now a situation where a subset of miners $I\subset\{1,\ldots, n\}$ decides to gather in a mining pool. The cumulated hashpower of this pool is then
$$
p_I = \sum_{i\in I}p_i,
$$
and the arrival rate of block rewards for a given miner $i$ rises from $p_i\cdot\lambda$ to $p_I\cdot\lambda$. Because the reward is shared among the pool participants, the size of the reward collected by miner $i$ decreases from $b$ to $p_i\cdot b$. The expected surplus is the same when mining solo and mining for a pool, but the variance (and therefore the risk) is smaller when mining for a pool. The management of a mining pool relies heavily on the reward distribution mechanism set up by a pool manager. For the redistribution system to be fair, each miner must be remunerated in proportion to her calculation effort. Miner $i$ must earn a share $p_i/p_I$
of the mining pool total income. The pool manager has to find a way to estimate the contribution of each pool participant. This is done by submitting \textit{shares} which are partial solutions to the cryptopuzzle easier to find than the actual solution. The manager's cut is a fraction $f\in(0,1)$ of the block discovery reward $b$. We start by presenting the proportional reward system in Section \ref{ssec:proportional}.

\subsection{The proportional reward system}\label{ssec:proportional}
The proportional reward system splits time in \textit{rounds} which correspond to the time elapsed between two block discoveries. During these \textit{rounds}, the miners submit \textit{shares}. The ratio of the number of \textit{shares} submitted by miner $i$ over the total number of \textit{shares} submitted by her fellow mining pool participants determines her share of the reward and should converge to her share of the mining pool computing power, that is $p_i/p_I$ (for sufficiently low complexity of the shares, the latter limit will be a very good approximation for the actual situation indeed). The surplus of miner $i$ is then given 
\begin{equation}\label{eq:surplus_miner_proportional}
R_t^i = u - c_i\cdot t + N^I_t\cdot (1-f)\cdot \frac{p_i}{p_I}\cdot b,\text{ }t\geq0, 
\end{equation}
where $(N^I_t)$ is a Poisson proccess of intensity $p_I\cdot\lambda$ that gives the number of blocks appended to the blockchain by the mining pool. The duration of a \textit{round} is exponentially distributed $\text{Exp}\left[(p_I\lambda)^{-1}\right]$. The uncertainty on the length of the round has undesirable consequences on the time value of the \textit{shares} submitted by the miners. Indeed, if $n$ shares are submitted during a round, then the value of a given \textit{share} is $(1-f)\cdot b / n$. The longer a \textit{round} lasts, the greater the value of $n$ is. The \textit{shares} are worth less in longer rounds which triggers an exodus behavior of miners toward mining pools with shorter rounds. This phenomenon, called pool hopping, has been documented in the early work of Rosenfeld \cite{DBLP:journals/corr/abs-1112-4980}. Yet another drawback is that a miner that has found a full solution may delay the submission until her ratio of \textit{shares} submitted reflects her fraction of the mining pool computing power. The proportional system is not \textit{incentive-compatible} using the terminology of Schrijvers et al. \cite{Schrijvers2017}. A discounting factor may be applied to compensate the decreasing value of shares over time, see for instance the slush's method \cite{slush}.
\\

\noindent Our work is also concerned about the risk undertaken by pool managers. Within the frame of the proportional reward system, the surplus of the pool manager is given by
\begin{equation}\label{eq:surplus_pool_manager_proportional} 
R_t^I = u + N^I_t\cdot f\cdot b,\text{ }t\geq0.
\end{equation}
Model \eqref{eq:surplus_pool_manager_proportional} does not account for any mining pool operating cost. The mining costs are entirely borne by miners and the mining pool manager only serves as coordinator. A proportional-type reward system should therefore lead to a low management fee $f$. \\

Although this system provides fairness, it has weaknesses that justify the introduction of a more sophisticated distribution mechanism. In particular, if miners seek to actually transfer some of the risk associated to the mining activity to the pool manager, then they should rather turn to a mining pool based on a \textit{Pay-per-Share} system, which is the focus of this paper and introduced in the next section.

\subsection{The Pay-Per-Share reward system}\label{ssec:PPS}
In a \textit{Pay-per-Share} reward system, the pool manager immediately rewards the miners for each \textit{share} submitted. Let $(M_t)_{t\geq0}$ be a Poisson process of intensity $\mu$ that counts the number of \textit{shares} submitted by the entire network of miners up to time $t\geq0$. Denote by $q\in(0,1)$ the relative difficulty of finding a block compared to finding a share. Let $0<w<b$ be the reward for finding a \textit{share}. The number of \textit{shares} submitted by miner $i$ is then a (thinned) Poisson process $(M^i_t)_{t\geq0}$ and her surplus when joining a \PPS\, mining pool becomes  
\begin{equation}\label{eq:surplus_miner_pps}
R_t^i = u - c_i\cdot t + M^i_t\cdot w,\text{ }t\geq0. 
\end{equation}
The intensities of the processes $(N_t)_{t\geq0}$ and $(M_t)_{t\geq0}$ are linked through $\lambda  = q\cdot\mu$. By setting $w=(1-f)\cdot b\cdot q$, we observe that the surplus \eqref{eq:surplus_miner_proportional} and \eqref{eq:surplus_miner_pps} have the same expectation at time $t$, but the variance and therefore the risk associated to \eqref{eq:surplus_miner_pps} is lower. This reward system has been shown to be resistant to pool hopping and is incentive compatible. It also entails a significant transfer of risk to the pool manager whose surplus process is now given by
\begin{equation}\label{eq:surplus_manager_pps}
R^I_t = u - M_t^I\cdot w + N_t^I\cdot b,\text{ }t\geq0,
\end{equation}
making her subject to the risk of bankruptcy.

\begin{remark}
Since the process $(M_t^I)_{t\geq 0}$ requires solving for a problem of lower complexitiy than $(N_t^I)_{t\geq 0}$, ($N_t^I)_{t\geq 0}$ is a subset of the path defined by the process $(M_t^I)_{t\geq 0}$. It means that both processes are not independent. Concretely, at the moment of the block reward payment $b$, at the same time there is a realisation of the miners' reward $w$. As we sometimes will need to isolate downward jumps without the simultaneous upward jump point, we define another process with a reduced intensity. We apply the superposition theorem (see e.g. \cite{kingman1992poisson}) to the Poisson process $M_t^I$ by redefining the down jump process as $(M_t^{I,d})_{t\geq 0} \sim Poisson(\mu_d)$, where $\mu_d = \mu-\lambda$.
\end{remark}

Figure \ref{surplus_process} represents sample paths of the surplus processes for an individual miner and the pool manager.

\begin{figure}[!ht]
  \centering
  \begin{subfigure}{0.49\textwidth}
    \includegraphics[width=250px]{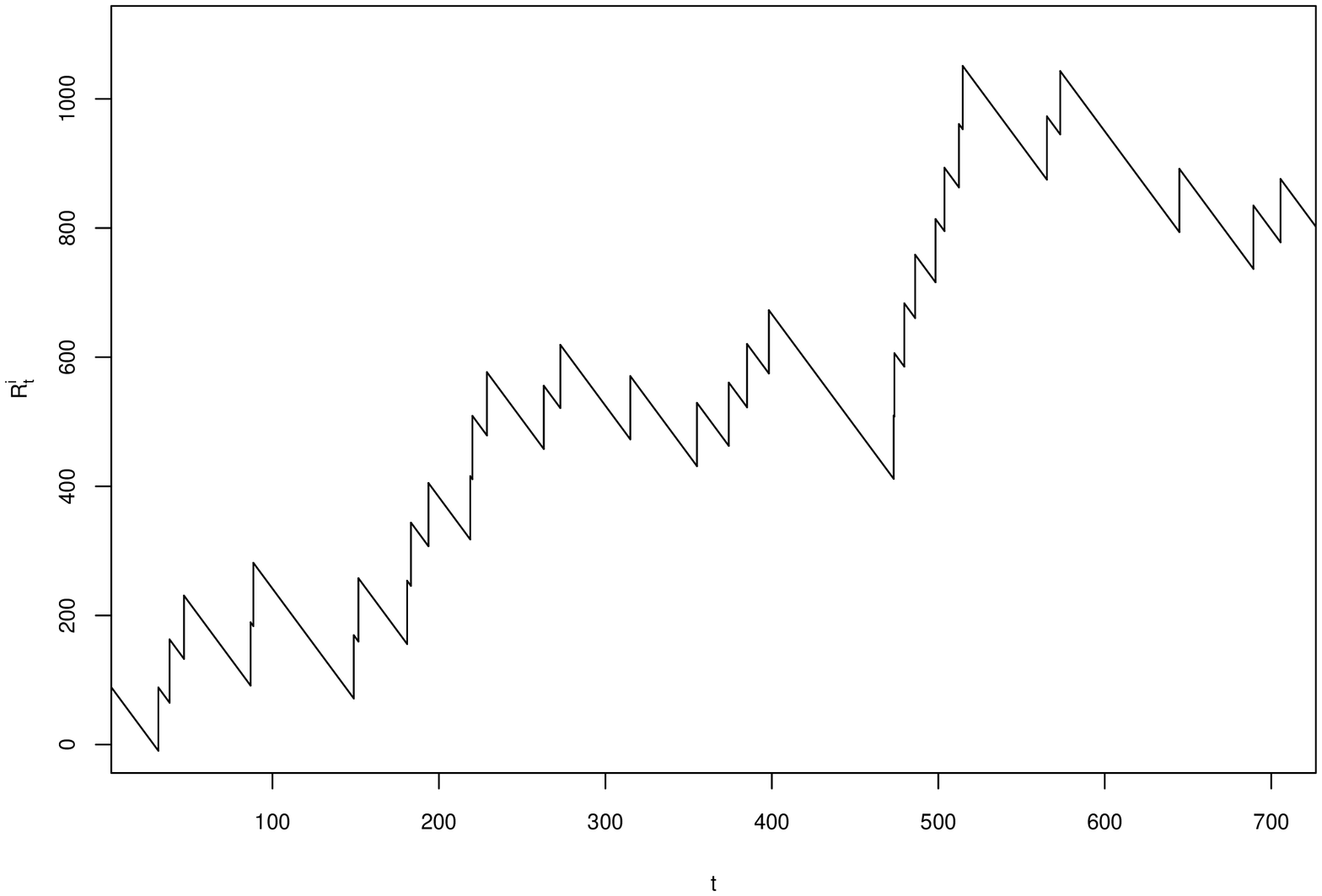}
    \caption{Individual miner}
  \end{subfigure}
  \begin{subfigure}{0.5\textwidth}
    \includegraphics[width=250px]{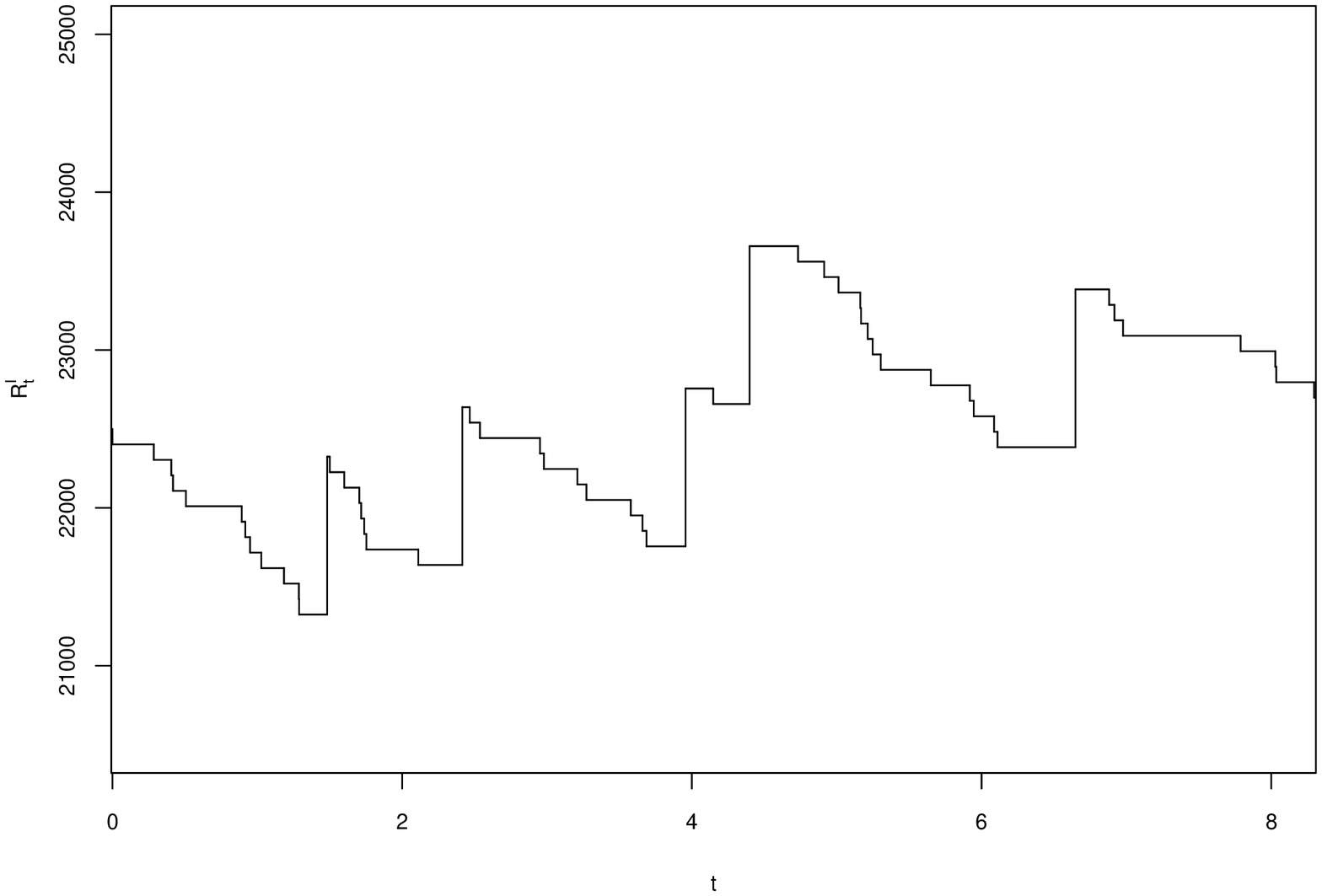}
    \caption{Pool manager}
  \end{subfigure}
  \caption{Illustration of surplus paths for the pool members and the pool manager.}
  \label{surplus_process}
\end{figure}

In addition to the bounty for finding a new block, blockchain users usually include a small financial incentive for the network to process their transaction. These transaction fees (e.g.\ referred to as \textit{gas} within the ETHEREUM blockchain), are known to be variable as they highly depend on the network congestion at a given time. Note also that since the operational cost is paid by miners using a fiat currency, it would be more accurate to account for the exchange rate of the cryptocurrency to some fiat currency. We can therefore model the successive rewards for \textit{shares} and blocks as sequences of nonnegative random variables denoted by $(W_k)_{k\geq1}$ and $(B_k)_{k\geq1}$ respectively, which for simplicity we will both assume to be \textit{i.i.d.}\ in this paper. A reward system that features a \textit{Pay-per-Share} mechanism and includes in the miners' reward the transaction fees is referred to as a \textit{Full Pay-per-Share} reward system by practitioners. The surplus of miner $i$ in a mining pool applying the \FPPS\,system is given by  
\begin{equation}\label{eq:surplus_miner_fpps}
R_t^i = u - c_i\cdot t + \sum_{k = 1}^{M^i_t}W_k,\text{ }t\geq0,
\end{equation}
and the surplus of the pool manager then becomes 
\begin{equation}\label{eq:surplus_manager_fpps}
R^I_t = u - \sum_{k = 1}^{M^I_t}W_k + \sum_{l = 1}^{N_t^I}B_l,\text{ }t\geq0. 
\end{equation}
In the following sections, we will now derive formulas for the ruin probability and expected surplus in case ruin did not occur up to a given time horizon for the models discussed above.

\section{Pool analysis with deterministic rewards}\label{secd}
We start with a fixed time horizon. For simplicity, we drop the superscript $I$ in the following developments.
\subsection{Deterministic time horizon}
For the pool manager's side, we first define some measures of interest. Let $\tau = \inf\{t\geq 0 : R_t < 0\}$ be the time of ruin of the pool manager, i.e. the first time his surplus reaches 0. The corresponding ruin probabilities in finite and infinite horizon respectively are given by
\begin{equation}
\psi(u,t)=\mathbb{P}(\tau\leq t),\ \text{and}\ \psi(u)=\mathbb{P}(\tau<\infty).
\end{equation}

The net profit condition in this case translates to $\lambda b > \mu w$. It implies from \cite{asmussen2010ruin}, that $\psi(u)<1$. We also define the expected surplus at time $t$ given that ruin has not occurred up to time $t$:
\begin{equation} \label{eq_expected_surplus}
V(u,t) = \mathbb{E}(R_t\mathbb{I}_{\tau>t}).
\end{equation}

Note that ruin can only occur at discrete times when the process $(M_t^d)_{t\geq0}$ admits a jump. We can rewrite the ruin time $\tau$ as

\begin{equation}
\tau = \inf\{t\geq 0 ; M_t^d w > u + N_t(b-w)\}=\inf\{t\geq 0 ; M_t^d  > u/w + N_t(b-w)/w\}.
\end{equation}

Equivalently, we can rewrite it as
\begin{equation} \label{eq_ruin_time}
\tau = \inf\{t\geq 0 ; M_t^d w/(b-w) > u/(b-w) + N_t\}
\end{equation}

to isolate the $(N_t)_{t\geq0}$ process with unit jumps. The study of the p.d.f.\ $f_{\tau}$ of $\tau$ is analogous to the derivations in \cite{goffard2019fraud}. Since $(N_t)_{t\geq0}$ and $(M_t^d)_{t\geq0}$  are Poisson process, they enjoy the order statistic property. That is, given that $N_t=n$, the jump times $\{T_1,\dots , T_n\}$ of the process $N_t$ have the same distribution as the order statistics vector of a random variable having distribution $F_t(s) = s/t,\ 0\leq s \leq t$. Further, let $\{S_n^d,n\in \mathbb{N}\}$ be the sequence of arrival times associated with the process $(M_t^d)_{t\geq0}$. Its distribution function is denoted by $F_{S_n^d}(t)$ and its p.d.f.\ by $f_{S_n^d}(t)$. Denote by $\lceil x \rceil$ the ceiling function. Following Corollary 1 from \cite{goffard2019fraud}, we proceed from Equation \eqref{eq_ruin_time} and derive the next steps.
\begin{theorem} \label{th_f_ruin_time}
Let $(N_t,\ t\geq 0)$ and $(M_t^d,\ t\geq 0)$ be Poisson processes with intensities $\{\lambda,\mu_d\}$ respectively, the p.d.f.\ of $\tau$ is given by
\begin{equation}\label{eq:pdf_ruin_time_deterministic_time_horizon}
f_{\tau}(t)=  \sum_{n=0}^{+\infty}\mathbb{E}\left[\frac{(-1)^n}{t^n}\, G_n\left[0\rvert S_{v_0},\dots,S_{v_{n-1}}^d\right]\rvert S_{v_n}^d=t\right]  f_{S_{v_n}^d}(t)\mathbb{P}\left[N_t=n\right],
\end{equation}
where $(v_n)_{n\geq0}$ is a sequence of integers defined as $v_n = \lceil n(b-w)/w+u/w\rceil$, $n\geq0$, and $\left(G_n(\cdot\rvert\{\ldots\}\right)_{n\in\mathbb{N}}$ is the sequence of Abel-Gontcharov polynomials defined in Appendix \ref{sec:ag_polynomials}.
\end{theorem}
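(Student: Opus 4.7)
The approach follows the strategy of Corollary 1 of Goffard \cite{goffard2019fraud}: reduce the first-passage problem to a ballot-type ordering of the jump times of two independent Poisson processes, and then close the expression using Abel--Gontcharov polynomials. The key probabilistic input is the remark preceding the statement: by the superposition theorem, $(N_t)_{t\geq 0}$ and $(M_t^d)_{t\geq 0}$ are independent Poisson processes with intensities $\lambda$ and $\mu_d$ respectively.

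First I would locate the ruin epochs. Equation \eqref{eq_ruin_time} shows that ruin coincides with a jump of $(M_t^d)_{t\geq 0}$, and by the definition of $v_n = \lceil n(b-w)/w + u/w\rceil$ as the smallest integer strictly larger than $n(b-w)/w + u/w$, the ruin condition at time $s$ reads $M_s^d \geq v_{N_s}$. Since $v_n$ is strictly increasing in $n$ and $M^d$ jumps by one unit, $M^d$ can first reach the level $v_n$ only at its $v_n$-th jump $S_{v_n}^d$, and only if $N$ still equals $n$ at that instant. Consequently, on the event $\{\tau = t, N_t = n\}$ one has $S_{v_n}^d = t$ together with the no-earlier-ruin constraints
\begin{equation*}
T_{k+1} \leq S_{v_k}^d, \qquad k = 0, 1, \ldots, n-1,
\end{equation*}
where $T_1 < \ldots < T_n$ denote the jump times of $N$ in $[0,t]$ (ruin at $S_{v_k}^d$ for $k<n$ would require $N_{S_{v_k}^d} = k$, i.e.\ $T_{k+1} > S_{v_k}^d$, so the complementary constraint precisely avoids it).

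Next I would condition on $\{N_t = n\}$ together with the values of $S_{v_0}^d, \ldots, S_{v_n}^d$. By independence these values are independent of $(T_1, \ldots, T_n)$, and by the order statistic property conditional on $N_t = n$ the vector $(T_1, \ldots, T_n)$ is distributed as the order statistics of $n$ i.i.d.\ uniforms on $[0,t]$. The probability of the ordering constraints above is therefore a joint distribution function evaluation for uniform order statistics with upper bounds $y_k = S_{v_k}^d$, which is precisely the Picard--Lefèvre setting and yields $(-1)^n t^{-n} G_n[0 \mid S_{v_0}^d, \ldots, S_{v_{n-1}}^d]$ by Corollary 1 of \cite{goffard2019fraud}.

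Finally I would assemble the density. The infinitesimal event $\{\tau \in dt,\, N_t = n\}$ factors, on this conditioning, into three independent pieces: (i) the $v_n$-th jump of $M^d$ occurs at $t$, with density $f_{S_{v_n}^d}(t)$; (ii) $N_t = n$, with probability $\mathbb{P}(N_t = n)$; (iii) the no-prior-ruin ordering event, whose conditional probability is the Abel--Gontcharov expression above. Taking expectation over $(S_{v_0}^d, \ldots, S_{v_{n-1}}^d)$ given $S_{v_n}^d = t$ and summing over $n \geq 0$ yields formula \eqref{eq:pdf_ruin_time_deterministic_time_horizon}. The main obstacle will be ensuring the Abel--Gontcharov identity is applied in the correct normalisation (the signs and the $1/t^n$ factor from rescaling the order statistics to $[0,t]$) and handling the boundary cases where $n(b-w)/w + u/w$ is an integer; both issues are settled in \cite{goffard2019fraud}, so the remaining work reduces to checking that the bookkeeping of the indices $v_k$ matches that framework.
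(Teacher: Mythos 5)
Your proposal is correct and follows essentially the same route as the paper: decompose over $\{N_t=n\}$, observe that ruin can only occur at the jump epochs $S^d_{v_n}$ of $M^d$ with the no-prior-ruin constraints $T_{k+1}\leq S^d_{v_k}$, invoke the order statistic property of the Poisson jump times, and evaluate the resulting uniform order-statistics probability via the Abel--Gontcharov identity, with the $(-1)^n/t^n$ factor coming from the rescaling property \eqref{eq:LinearTransform}. The only cosmetic difference is that the paper treats the $n=0$ case separately, while you absorb it into the general formula via $G_0=1$.
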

The proof is delegated to Appendix \ref{sec:proof_deterministic_time_horizon}. The expression of the ruin time p.d.f.\ \eqref{eq:pdf_ruin_time_deterministic_time_horizon} is not convenient for numerical purposes. The infinite series in \eqref{eq:pdf_ruin_time_deterministic_time_horizon} must be truncated, possibly to a high order to reach an acceptable level of accuracy. Also, the evaluation of the Abel-Gontcharov polynomials relies on the recurrence relationships \eqref{eq:RecursionAGPolynomials} which are known to suffer from numerical instabilities. Moreover, the conditional expectation with respect to $\{S_{v_n}^d = t\}$ itself requires the use of Monte Carlo simulations. Finally, a similar algebraic expression for $V(u,t)$ is out of sight. In view of all these difficulties, we therefore propose as in \cite{albrecher2020profitability} a workaround which consists of replacing the deterministic time horizon by a random variable with exponential distribution.

\subsection{Exponential time horizon}
To obtain a nicer solution to the problem, we now randomize the time horizon $T$. The practical intuition suggests that the time horizon is never fixed in advance and is subject to various external factors, such as bitcoin price fluctuations, in-pool events etc. We choose the time horizon $T$ to be exponentially distributed with rate ${1}/{t}$ (so that $\mathbb{E}(T)=t$). This leads to computable expressions having an intuitive justification due to the lack of memory property of the exponential distribution. Let $\hat{V}(u,t):=\mathbb{E}(R_T\mathbb{I}_{\tau>T})$ denote the expected value of the surplus at the exponential time horizon $T$.

\begin{theorem}\label{th3.4}
Let $b$ and $w$, $b>w$, be fixed positive integers. Then the expected surplus at an exponential time horizon can be expressed in the form $$\hat{V}(u,t) =\sum_{i=1}^{w} c_i x_i^{u} + u+\lambda b\, t -(\lambda+\mu_d)w\, t,
$$
where $x_1,\ldots,x_w$ are the $w$ roots inside the unit disk of the equation
\begin{equation}\label{chare}
	\lambda x^{b}-(\lambda+\mu_d+{1}/{t})x^{w}+\mu_d =0,
\end{equation}
and the constants $c_1,\ldots,c_w$ are the solution of the linear equation system 
\begin{equation}\label{system}\left(\begin{array}{ccc}
		\lambda x_1^{b-w}-(\lambda+\mu_d+{1}/{t}) & \cdots &\lambda x_w^{b-w}-(\lambda+\mu_d+{1}/{t}) \\
		\lambda x_1^{b-w+1}-(\lambda+\mu_d+{1}/{t}) x_1 & \cdots &\lambda x_w^{b-w+1}-(\lambda+\mu_d+{1}/{t})x_w\\\vdots & \ddots &\vdots  \\
		\lambda x_1^{b-1}-(\lambda+\mu_d+{1}/{t}) x_1^{w-1} & \cdots &\lambda x_w^{b-1}-(\lambda+\mu_d+{1}/{t})x_w^{w-1}
	\end{array}\right)\left(\begin{array}{c}
		c_1 \\c_2\\\vdots\\c_w 
	\end{array}\right)=\left(\begin{array}{c}
		A_1 \\A_2\\\vdots \\A_w
	\end{array}\right),\end{equation}
with \[A_i=(i-1)\mu_d+\mu_dt(\lambda b-(\lambda+\mu_d)w)-\mu_dw,\quad i=1,\ldots,w.\]
\end{theorem}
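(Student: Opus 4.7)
The plan is to derive a difference-type functional equation for $\hat{V}(u,t)$ using the memoryless property of $T\sim\text{Exp}(1/t)$ and the Markov structure of $R_t$, and then to solve it explicitly. First I would condition on which of three independent exponential clocks fires first: the horizon $T$ (rate $1/t$), a block arrival (rate $\lambda$, producing the net upward jump $b-w$), or a share-only event from $(M_t^d)$ (rate $\mu_d$, producing the downward jump $-w$). If $T$ wins the race, then $R_T=u$ and $\tau>T$; otherwise the process restarts from the post-jump level with an independent $\text{Exp}(1/t)$ horizon. This yields, for $u\geq w$,
\begin{equation*}
(\lambda+\mu_d+1/t)\,\hat{V}(u,t) \;=\; u/t + \lambda\,\hat{V}(u+b-w,t) + \mu_d\,\hat{V}(u-w,t),
\end{equation*}
and the same identity with the last term dropped for $0\leq u<w$, because there a share event drives the surplus strictly below zero and causes immediate ruin (killing the indicator $\mathbb{I}_{\tau>T}$).

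A linear ansatz $\hat{V}(u,t)=u+Dt$ solves the interior equation ($u\geq w$) with $D=\lambda b-(\lambda+\mu_d)w$, in accordance with the sanity check $\mathbb{E}[R_T]=u+\lambda bt-(\lambda+\mu_d)wt$. Subtracting this particular solution, the homogeneous recursion admits exponential modes $x^u$ precisely when $x$ satisfies the characteristic equation \eqref{chare}. Since ruin is vanishingly unlikely as $u\to\infty$, one expects $\hat{V}(u,t) = u+Dt+o(1)$, so any root $x_i$ of \eqref{chare} with $|x_i|\geq 1$ must carry coefficient $c_i=0$: only in-disk roots may appear in the homogeneous part.

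To count the admissible roots I would apply Rouch\'e's theorem on $|x|=1$ with the splitting $f(x) = -(\lambda+\mu_d+1/t)\,x^w$ and $g(x) = \lambda x^b + \mu_d$: since $|g(x)|\leq \lambda+\mu_d<\lambda+\mu_d+1/t=|f(x)|$ on the unit circle, and $f$ has a zero of order $w$ at the origin, the polynomial $f+g$ has exactly $w$ roots $x_1,\ldots,x_w$ inside $|x|<1$. Writing $\hat{V}(u,t)=\sum_{i=1}^{w}c_i x_i^u+u+Dt$, the $w$ unknowns $c_i$ are pinned down by imposing the functional equation at the $w$ boundary levels $u=0,1,\ldots,w-1$; a direct substitution shows that the interior-recursion pieces cancel because each $x_j$ solves \eqref{chare}, and what remains in the $i$-th equation ($u=i-1$) is a linear constraint whose coefficient of $c_j$ is $\lambda x_j^{b-w+i-1}-(\lambda+\mu_d+1/t)\,x_j^{i-1}$ and whose right-hand side reduces to $A_i=(i-1)\mu_d + \mu_d t(\lambda b-(\lambda+\mu_d)w) - \mu_d w$, i.e.\ precisely the system \eqref{system}.

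The main obstacle I anticipate is the Rouch\'e count combined with the a priori growth control $\hat{V}(u,t)=u+Dt+o(1)$ that legitimately discards the $b-w$ roots with $|x|\geq 1$; this latter bound requires a quantitative decay of $\psi(u,T)$ in $u$ (geometric decay is standard in this compound-Poisson setting, since the net profit condition $\lambda b>\mu w$ holds). Nondegeneracy of the $w\times w$ coefficient matrix is then routine via a Vandermonde-type argument, assuming the in-disk roots are simple, which is the generic case.
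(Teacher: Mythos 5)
Your proposal is correct and follows essentially the same route as the paper: a first-step/race-of-exponentials analysis yields the identical difference equation $\lambda\hat{V}(u+b-w,t)-(\lambda+\mu_d+1/t)\hat{V}(u,t)+\mu_d\hat{V}(u-w,t)+u/t=0$ with the same particular solution $u+(\lambda b-(\lambda+\mu_d)w)t$, Rouch\'e's theorem (your splitting $-(\lambda+\mu_d+1/t)x^w$ versus $\lambda x^b+\mu_d$ is a slightly cleaner variant of the paper's) gives exactly $w$ roots in the open unit disk, and evaluating the equation at $u=0,\ldots,w-1$ reproduces the system \eqref{system} with the stated $A_i$. The only notable difference is that the paper discards the roots of modulus at least one via the simple linear bound $0\le\hat{V}(u,t)\le u+(\lambda b-\mu_d w)t$, whereas your $u+Dt+o(1)$ asymptotics needs the Lundberg-type decay you flag, which is more than is required.
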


\begin{proof}
Akin to the approach in \cite{albrecher2010direct}, consider some small $h>0$ and condition on the following scenarios during the time interval $(0,h)$: 
\begin{enumerate}
	\item no jump and $T>h$;
	\item no jump and $T\leq h$;
	\item occurrence of an upward jump;
	\item occurrence of a downward jump.
\end{enumerate}
All other combinations of these events have negligible probability in the limit $h\to 0$ that we will pursue below. One then obtains
\begin{equation}\label{hv}
\begin{split}
\hat{V}(u,t)&=e^{-(\frac{1}{t}+\lambda+\mu_d)h}\hat{V}(u,t) + \frac{1}{t}\int_0^h e^{-{s}/{t}}e^{-(\lambda +\mu_d) s} u\,ds\\
& + \lambda\int_0^he^{-\lambda s} e^{-({1}/{t}+\mu_d) s} \hat{V}(u+b-w,t)\,ds +\mu_d \int_0^he^{-\mu_d s} e^{-({1}/{t}+\lambda) s}  \hat{V}(u-w,t) \,ds.
\end{split} 
\end{equation}
We now take the derivative w.r.t. $h$ and set $h=0$ to obtain
\begin{equation} \label{eq_recursion_V_hat}
\lambda\hat{V}(u+b-w,t)-(\lambda+\mu_d+{1}/{t})\hat{V}(u,t)+\mu_d\hat{V}(u-w,t)+{u}/{t}=0,\quad u\ge 0.
\end{equation}
By definition of $\hat{V}(u,t)$ we have the boundary conditions $\hat{V}(u,t)=0$ for all $u<0$ and the linear boundedness $0\leq\hat{V}(u,t)\leq u+ (\lambda b-\mu_d w)t
$ in both $u$ and $t$ for all $u,t\ge 0$. 

Equation \eqref{eq_recursion_V_hat} is an inhomogeneous difference equation with constant coefficients (see e.g.\ \cite{jerri2013linear} for solution methods), which has 
the general solution 
\[\hat{V}(u,t)=\sum_{i=1}^{b} c_i x_i^{u} + d_0 + d_1 u
\]
with constants $\{c_i\}^b_{i=1}, \{x_i\}^{b}_{i=1}, d_0, d_1$ still to be determined.\\
Let us start with the inhomogeneous part: plugging the ansatz $d_0 + d_1 u$ into \eqref{eq_recursion_V_hat} gives
\[d_1=1,\;d_0=\lambda b t -(\lambda+\mu_d)w t.\]
For the homogeneous part, we consider the characteristic equation
\eqref{chare}, which by the Fundamental Theorem of Algebra has exactly $b$ complex roots $x_1,\dots,x_{b}$. The linear boundedness of $\hat{V}(u,t)$, however, excludes any solution with absolute value exceeding 1 (i.e., the corresponding constants $c_i$ must be zero). In fact, it turns out that exactly $w$ roots of the polynomial in \eqref{chare} are located inside the unit disk in the complex plane. To see this, observe first that 
$(\lambda+\mu_d+{1}/{t})x^{w}+\mu_d$ has exactly $w$ roots inside the unit disk (due to $\mu_d/(\lambda+\mu_d+{1}/{t})<1$). Then Rouch\'e's Theorem establishes that the same is true for the entire polynomial in \eqref{chare}, if 
\[\vert \lambda z^{b}\vert<\vert -(\lambda+\mu_d+{1}/{t})z^{w}+\mu_d\vert\; \text{on}\;\vert z\vert=1,\]
which translates into the condition 
\begin{equation}\label{ine}
\vert\mu_d-(\lambda+\mu_d+{1}/{t})z^{w}\vert>\lambda\; \text{on}\;\vert z\vert=1.
\end{equation}
The reverse triangle inequality states for any complex $a,b\in{\mathbb C}$ that $\vert a-b\vert\ge \Big\vert\vert a\vert-\vert b\vert\Big\vert$, which shows that for $\vert z\vert=1$ the left-hand side of \eqref{ine} is larger than $\lambda+1/t$, so that \eqref{ine} is indeed fulfilled.\\
It is now only left to determine the coefficients $c_1,\ldots,c_w$ corresponding to the $w$ roots $x_1,\ldots,x_w\in {\mathbb C}$ with $\vert x_i\vert<1$ of \eqref{chare}. To that end, note that \eqref{eq_recursion_V_hat} evaluated at $u=0,\ldots,w-1$ gives the following system of equations:
	\begin{align*}
	&\lambda\hat{V}(b-w,t)-(\lambda+\mu_d+{1}/{t})\hat{V}(0,t)=0,\\
	&\lambda\hat{V}(b-w+1,t)-(\lambda+\mu_d+{1}/{t})\hat{V}(1,t)+1/t=0,\\
	&\cdots \\
	&\lambda\hat{V}(b-1,t)-(\lambda+\mu_d+{1}/{t})\hat{V}(w-1,t)+(w-1)/t=0.
\end{align*}
Substituting the form 
\[\hat{V}(u,t)=\sum_{i=1}^{w} c_i x_i^{u} + u+a_t
\]
with $a_t=\lambda b\, t -(\lambda+\mu_d)w\, t$
into this system leads to 
\begin{align*}
	&\lambda\sum_{i=1}^{w} c_i x_i^{b-w}+\lambda(b-w)+\lambda a_t-(\lambda+\mu_d+{1}/{t})\left(\sum_{i=1}^{w} c_i+a_t\right)=0,\\
	&\lambda\sum_{i=1}^{w} c_i x_i^{b-w+1}+\lambda(b-w+1)+\lambda a_t-(\lambda+\mu_d+{1}/{t})\left(\sum_{i=1}^{w} c_ix_i+(1+a_t)\right)+1/t=0,\\
	&\cdots \\
	&\lambda\sum_{i=1}^{w} c_i x_i^{b-1}+\lambda(b-1)+\lambda a_t-(\lambda+\mu_d+{1}/{t})\left(\sum_{i=1}^{w} c_ix_i^{w-1}+(w-1+a_t)\right)+(w-1)/t=0.
\end{align*}
But the latter can be rewritten in the form \eqref{system}. 
\end{proof}

\begin{example}
Figure \ref{fig_Vh} depicts $\hat{V}(u,t)$ as a function of $u$ for the parameters $b = 100, w = 9, t = 1, \lambda=10, \mu_d = 90$. Note that for some capital levels $u$ the increase of $\hat{V}(u,1)$ from $u$ to $u+1$ is larger than for others. This is linked to how many down-jumps relative to up-jumps are needed to become negative, and due to the discrete nature of the problem such jumps in $\hat{V}(u,t)$ occur naturally. 
\begin{figure}[!ht]
		\includegraphics[width = 300px]{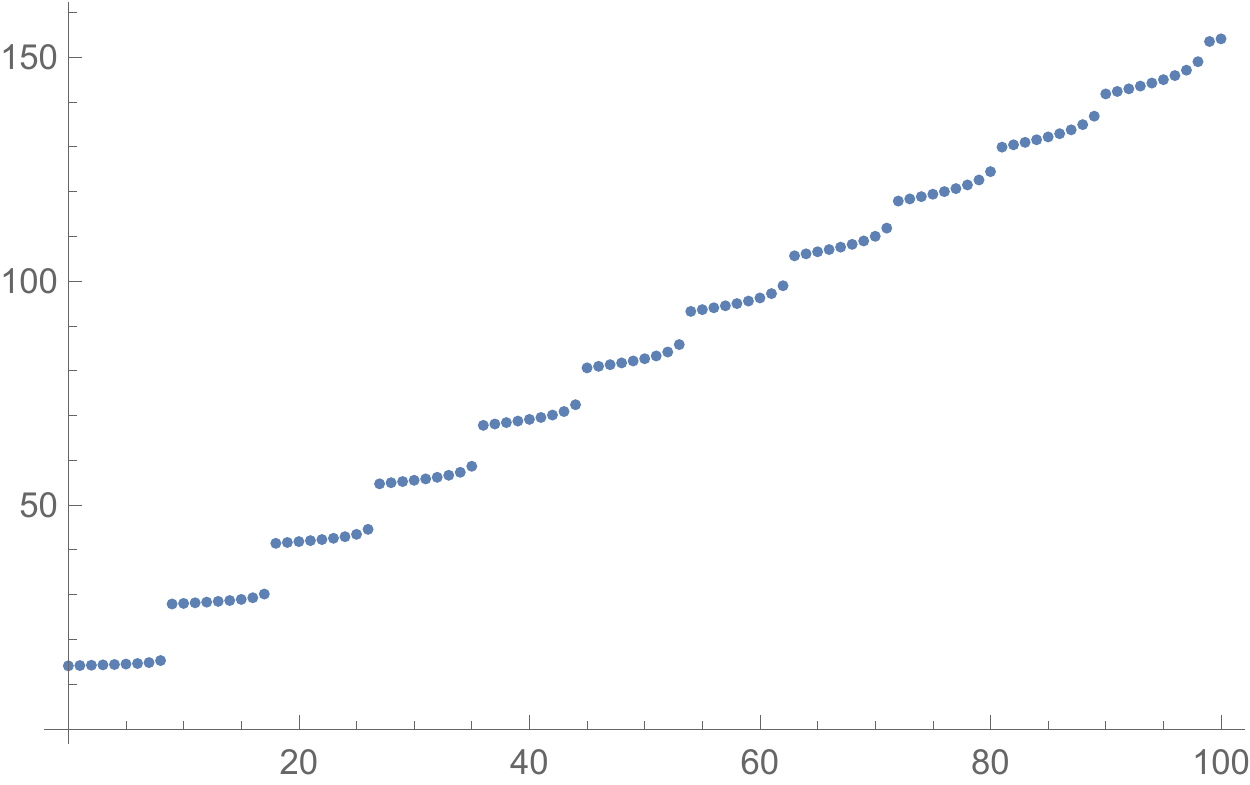}
		\centering
		\caption{$\hat{V}(u,1)$ as a function of $u$}
		\label{fig_Vh}
\end{figure}
\end{example}

In an analogous way, an explicit formula for $\widehat{\psi}(u,t)=\mathbb{E}\left[\psi(u,T)\right]$ can be derived.
\begin{theorem}
Let $b$ and $w$, $b>w$, be fixed positive integers. Then the ruin probability up to an exponential time horizon with mean $t$ is given by $$\widehat{\psi}(u,t) = \sum_{i=1}^{w} c_i x_i^{u}$$ 
where $x_1,\ldots,x_w$ are the $w$ roots inside the unit disk of Equation
\eqref{chare} and the constants $c_1,\ldots,c_w$ are the solution of the linear equation system \begin{equation}\label{systempsi}\left(\begin{array}{ccc}
		\lambda x_1^{b-w}-(\lambda+\mu_d+{1}/{t}) & \cdots &\lambda x_w^{b-w}-(\lambda+\mu_d+{1}/{t}) \\
		\lambda x_1^{b-w+1}-(\lambda+\mu_d+{1}/{t}) x_1 & \cdots &\lambda x_w^{b-w+1}-(\lambda+\mu_d+{1}/{t})x_w\\\vdots & \ddots &\vdots  \\
		\lambda x_1^{b-1}-(\lambda+\mu_d+{1}/{t}) x_1^{w-1} & \cdots &\lambda x_w^{b-1}-(\lambda+\mu_d+{1}/{t})x_w^{w-1}
	\end{array}\right)\left(\begin{array}{c}
		c_1 \\c_2\\\vdots\\c_w 
	\end{array}\right)=\left(\begin{array}{c}
		-\mu_d \\-\mu_d\\\vdots \\-\mu_d
	\end{array}\right).\end{equation}
\end{theorem}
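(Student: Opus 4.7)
The plan is to adapt the infinitesimal conditioning argument from the proof of Theorem \ref{th3.4}, replacing the random variable $R_T\mathbb{I}_{\tau>T}$ by the indicator $\mathbb{I}_{\tau\le T}$ throughout. First I would condition on the same four scenarios during a small interval $(0,h)$: (i) no jump and $T>h$; (ii) no jump and $T\le h$; (iii) an upward jump; (iv) a downward jump. Scenario (ii) now contributes zero, since no jump on $(0,h]$ means ruin has not yet occurred by time $T$; scenario (iv) requires the boundary convention $\widehat{\psi}(u,t)=1$ for $u<0$, since a downward jump of size $w$ from $u<w$ drops the surplus below zero and produces ruin. Differentiating in $h$ and setting $h=0$ yields the homogeneous difference equation
\[
\lambda\widehat{\psi}(u+b-w,t)-(\lambda+\mu_d+1/t)\widehat{\psi}(u,t)+\mu_d\widehat{\psi}(u-w,t)=0,\qquad u\ge 0,
\]
which is precisely \eqref{eq_recursion_V_hat} stripped of its $u/t$ inhomogeneity (the latter having arisen from the running-surplus contribution in the $\widehat{V}$ case).

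The second step is to exploit the fact that this recursion has exactly the same characteristic equation \eqref{chare} as in the proof of Theorem \ref{th3.4}. The Rouch\'e argument given there applies verbatim and identifies exactly $w$ of the $b$ roots as lying strictly inside the unit disk. Since $\widehat{\psi}$ is bounded by $1$ and, for fixed $t$, tends to $0$ as $u\to\infty$ (the expected number of jumps on $[0,T]$ is finite, so ruin becomes negligible as initial capital grows), the coefficients attached to all roots with modulus at least $1$ must vanish. No particular-solution piece is needed either, because the recursion is already homogeneous, so the ansatz collapses to $\widehat{\psi}(u,t)=\sum_{i=1}^{w}c_i x_i^{u}$.

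Finally, I would pin down $c_1,\ldots,c_w$ by evaluating the recursion at the $w$ boundary indices $u=0,1,\ldots,w-1$. For each such $u$, the inequality $u-w<0$ forces $\widehat{\psi}(u-w,t)=1$, so the equation takes the form $\lambda\widehat{\psi}(u+b-w,t)-(\lambda+\mu_d+1/t)\widehat{\psi}(u,t)=-\mu_d$. Plugging the ansatz into these $w$ equations produces exactly the linear system \eqref{systempsi}, which completes the argument. The main obstacle, as in Theorem \ref{th3.4}, is the root count via Rouch\'e's theorem together with the justification that only the $w$ roots inside the unit disk contribute; but this is inherited in full from the previous proof. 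The remaining computation is in fact lighter, since the absence of the $R_T$ factor removes both the particular solution $u+a_t$ and the more elaborate right-hand side $A_1,\ldots,A_w$, leaving only the single constant $-\mu_d$ arising from the negative-$u$ boundary.
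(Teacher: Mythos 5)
Your proposal is correct and follows essentially the same route as the paper: the same four-scenario conditioning on $(0,h)$ (with the no-jump, $T\le h$ term vanishing), the same homogeneous recursion with characteristic equation \eqref{chare} and the Rouch\'e root count inherited from Theorem \ref{th3.4}, and the same evaluation at $u=0,\ldots,w-1$ with $\widehat{\psi}(u-w,t)=1$ to obtain \eqref{systempsi}. Your extra remark that $\widehat{\psi}(u,t)\to 0$ as $u\to\infty$ is a harmless addition; the paper only uses boundedness by $1$ together with the strict Rouch\'e inequality to discard the roots outside the unit disk.
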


\begin{proof}
We proceed in the same way as in the proof of Theorem \ref{th3.4}. The analogue of \eqref{hv} then is
\begin{equation}\label{hvpsi}
	\begin{split}
		\widehat{\psi}(u,t)=&\,e^{-(\frac{1}{t}+\lambda+\mu_d)h}\widehat{\psi}(u,t)+ \lambda\int_0^he^{-\lambda s} e^{-({1}/{t}+\mu_d) s} \widehat{\psi}(u+b-w,t)\,ds \\
		&  +\mu_d \int_0^he^{-\mu_d s} e^{-({1}/{t}+\lambda) s}  \widehat{\psi}(u-w,t) \,ds
	\end{split}
\end{equation}
and \eqref{eq_recursion_V_hat} is replaced by 
\begin{equation} \label{eq_recursion_psi}
	\lambda\widehat{\psi}(u+b-w,t)-(\lambda+\mu_d+{1}/{t})\widehat{\psi}(u,t)+\mu_d\widehat{\psi}(u-w,t)=0,\quad u\ge 0,
\end{equation}
which is the homogeneous equation of the former. The boundary conditions here are given by $\widehat{\psi}(u,t)=1$ for $u<0$ as well as the obvious bound $\widehat{\psi}(u+b-w,t)\le 1$ for all $u\ge 0$. Correspondingly, from the proof of the previous theorem we then know that 
\begin{equation}\label{formpsi}\widehat{\psi}(u,t)=\sum_{i=1}^{w} c_i x_i^{u} 
\end{equation}
with constants $c_1,\ldots,c_w$ still to be determined. Evaluating \eqref{eq_recursion_psi} at $u=0,\ldots,w-1$ gives
	\begin{align*}
	&\lambda\widehat{\psi}(b-w+j,t)-(\lambda+\mu_d+{1}/{t})\widehat{\psi}(j,t)+\mu_d=0,\quad j=0,\ldots,w-1.
\end{align*}
Substituting \eqref{formpsi} into these leads to 
\begin{align*}
	&\lambda\sum_{i=1}^{w} c_i x_i^{b-w+j}-(\lambda+\mu_d+{1}/{t})\left(\sum_{i=1}^{w} c_ix_i^j\right)+\mu_d=0,\quad j=0,\ldots,w-1,
\end{align*}
or equivalently \eqref{systempsi}.
\end{proof}

\section{Pool analysis with stochastic rewards}\label{secr}

Let us now assume that the up- and downward jumps in the dynamics of the pool manager's surplus  are stochastic. Under certain assumptions on the nature of these jumps, this will allow us to derive closed form formulas for $\widehat{\psi}$ and $\hat{V}$ in the spirit of \cite{albrecher2010direct}, see also  \cite[Ch.~4]{asmussen2010ruin}. Equation \eqref{eq:surplus_manager_pps} then is replaced by
\begin{equation} \label{eq_surplus_PM_stoch}
	R_t = u - \sum_{n=1}^{M_t^d} W_n + \sum_{n=1}^{N_t} B_{r,n},\quad t\geq 0,
\end{equation}
where we assume $W_n,\ n\in \mathbb{N}$ to be i.i.d.\ positive random variables with cumulative distribution function $F_W$ and finite mean 
representing payments to the pool members, and  $B_{r,n},\ n\in \mathbb{N}$ are assumed to be i.i.d.\ positive random variables with distribution function $F_{B_r}$ and finite mean 
representing the remaining inflow of bounty rewards diminished by the simultaneous payout to the respective pool member.

Consider the expected surplus of the pool manager as defined previously with a random time horizon $T$. Concretely, $T$ follows an exponential distribution with mean $t$. As in the previous section, we are interested in $\hat{V}(u,t)$. 

\begin{proposition}\label{propV} The quantity $\hat{V}(u,t)=\mathbb{E}(R_T\mathbb{I}_{\tau>T})$ for the pool surplus process \eqref{eq_surplus_PM_stoch} is a solution of the integral equation 
	\begin{equation} \label{inteq}
		\lambda\int_0^\infty\hat{V}(u+b_r,t)\,dF_{B_r}(b_r)-(\lambda+\mu_d+{1}/{t})\hat{V}(u,t)+\mu_d\int_0^u \hat{V}(u-w,t) \,dF_W(w)+{u}/{t}=0,\quad u\ge 0,
	\end{equation}
	with boundary conditions $\hat{V}(u,t)=0$ for all $u<0$ and $0\leq\hat{V}(u,t)\leq u+ (\lambda \mathbb{E}[B_r]-\mu_d \mathbb{E}[W])t$ 
	 for all $u,t\ge 0$.
\end{proposition}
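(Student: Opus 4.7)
The plan is to adapt the infinitesimal conditioning argument used in the proof of Theorem \ref{th3.4} to the case of random jump sizes. For a small $h>0$, I would condition on what occurs during the interval $(0,h)$. Four events have non-negligible probability as $h\to 0$: (i) neither Poisson process jumps and the exponential clock $T$ outlasts $h$; (ii) $T$ is exhausted before any jump; (iii) an upward (bounty) jump of random size $B_r$ arrives first; (iv) a downward (share-payment) jump of random size $W$ arrives first. By the strong Markov property of the compound Poisson dynamics together with the memorylessness of $T$, the respective contributions to $\hat{V}(u,t)$ are
\begin{align*}
& e^{-(1/t+\lambda+\mu_d)h}\,\hat{V}(u,t), \\
& \frac{1}{t}\int_0^h e^{-(1/t+\lambda+\mu_d)s}\,u\,ds, \\
& \lambda\int_0^h e^{-(1/t+\lambda+\mu_d)s}\int_0^\infty \hat{V}(u+b_r,t)\,dF_{B_r}(b_r)\,ds, \\
& \mu_d\int_0^h e^{-(1/t+\lambda+\mu_d)s}\int_0^u \hat{V}(u-w,t)\,dF_W(w)\,ds.
\end{align*}
The restriction of the inner integral in the last term to $(0,u]$ reflects the boundary condition $\hat{V}(y,t)=0$ for $y<0$: a downward jump exceeding the current surplus triggers immediate ruin, contributing zero.

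Summing these four pieces yields an identity of the form $\hat{V}(u,t) = \hat{V}(u,t)\,e^{-(1/t+\lambda+\mu_d)h} + (\text{three integrals from } 0 \text{ to } h)$. Differentiating both sides with respect to $h$ at $h = 0$ cancels the leading term and produces \eqref{inteq} directly: the coefficient $\lambda+\mu_d+1/t$ of $\hat{V}(u,t)$ comes from differentiating the exponential factor in case (i), the inhomogeneity $u/t$ from case (ii), and the two integral terms from (iii)--(iv).

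For the boundary data, $\hat{V}(u,t)=0$ when $u<0$ is immediate since $\tau=0$ almost surely in that case, so $\mathbb{I}_{\tau>T}\equiv 0$. The non-negativity $\hat{V}\ge 0$ is built in, because on $\{\tau>T\}$ the process has not ruined so $R_T\ge 0$. The linear growth bound plays exactly the same role as in Theorem \ref{th3.4}, namely to single out the relevant solution of \eqref{inteq} among its otherwise infinite-dimensional solution family; it is obtained by comparison with the unconditional expectation $\mathbb{E}(R_T)=u+(\lambda\,\mathbb{E}[B_r]-\mu_d\,\mathbb{E}[W])\,t$, which follows from Wald's identity applied to the two compound Poisson components of \eqref{eq_surplus_PM_stoch} together with $\mathbb{E}(M_T^d)=\mu_d t$ and $\mathbb{E}(N_T)=\lambda t$.

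The main technical subtlety lies in justifying the differentiation under the integral signs in cases (iii) and (iv). Local boundedness and measurability of $\hat{V}(\cdot,t)$ in $u$, which follow from its probabilistic representation together with the finite first moments of $B_r$ and $W$, provide enough regularity (via dominated convergence) to interchange the $h$-derivative with the $dF_W$ and $dF_{B_r}$ integrals; once this is granted, the derivation is straightforward bookkeeping paralleling the deterministic-reward case.
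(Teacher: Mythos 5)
Your proposal is correct and follows essentially the same route as the paper: the same four-case infinitesimal conditioning on $(0,h)$ with the jump size integrated against $dF_{B_r}$ and $dF_W$, differentiation at $h=0$ to get \eqref{inteq}, and the linear bound via $\hat{V}(u,t)\le \mathbb{E}(R_T)$. Your added remarks on justifying the interchange of limits and on the role of the bound in selecting the solution are sensible refinements of the same argument.
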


\begin{proof}
	We extend the approach of the proof of Theorem \ref{th3.4} by conditioning on the size of the jump in case a jump occurs. For some small $h>0$ we correspondingly get
	\begin{equation}\label{neu0}
		\begin{split}
			\hat{V}(u,t)&=e^{-(\frac{1}{t}+\lambda+\mu_d)h}\hat{V}(u,t) + \frac{1}{t}\int_0^h e^{-{s}/{t}}e^{-(\lambda +\mu_d) s} u\,ds\\
			& + \lambda\int_0^he^{-\lambda s} e^{-({1}/{t}+\mu_d) s} \int_0^\infty\hat{V}(u+b_r,t)\,dF_{B_r}(b_r)\,ds\\
			&  +\mu_d \int_0^he^{-\mu_d s} e^{-({1}/{t}+\lambda) s}\int_0^u \hat{V}(u-w,t) \,dF_W(w)\,ds.
		\end{split}
	\end{equation}
Taking the derivative w.r.t.\ $h$ and setting $h=0$, one obtains \eqref{inteq}. The property $\hat{V}(u,t)=0$ for all $u<0$ follows by definition and the linear upper bound in $u$ and $t$ is obtained from the inequality $\hat{V}(u,t)\le \mathbb{E}(R_T)$. 

\end{proof}

For our purposes, it is very reasonable to assume (and will lead to simplified notation) that the generic random variables $B_r$ and $W$ are connected via
\begin{equation} \label{conn}
	B_r=a\,W
\end{equation}
for some constant $a>1$ that depends on the number of miners in the pool. Indeed, $W$ is the payment to the pool miner for solving a less complex puzzle, and $B_r$ can be seen as the bounty reward when the more complex puzzle is solved minus the payment to the miner who solved it, and that latter payment will be a constant fraction, depending on the specification of the pool rules. Note that for most positive random variables, a scaled version of it belongs to the same class of random variables with only the parameter(s) changed, and the latter is indeed the case for all distributional assumptions that we will pursue in this paper. In any case, all results below can easily be adapted to the case when $B_r$ and $W$ follow unrelated combinations of exponentials with different $n$ and $A_i$'s.\\

Let us now consider in more detail the case where both the up- and down-jumps are combinations of exponential random variables.  The latter class is dense in the class of all random variables on the positive half-line, so that the result is in fact quite general (see e.g.\ Dufresne \cite{dufresne2007fitting}). Concretely, the density of downward jumps is then assumed to be of the form 
\begin{equation} \label{combexp}
	f_W(w)=\sum_{i=1}^n A_i\alpha_i e^{-\alpha_i w},\quad w>0,
\end{equation}
where $\alpha_1<\alpha_2<\ldots<\alpha_n$ and $A_1+\cdots+A_n=1$ (but the $A_i$ are not necessarily positive). The Laplace transform of this density is given by
\[\tilde{f}_W(s):=\mathbb{E}(e^{-sW})=\sum_{i=1}^n A_i\,\frac{\alpha_i}{\alpha_i+s},\quad \text{Re}(s)> -\alpha_1.\]
From \eqref{conn}, we then have
\begin{equation} \label{combexpB}
	f_{B_r}(b_r)=\sum_{i=1}^n A_i\beta_i e^{-\beta_i b_r},\quad b_r>0
\end{equation}
with $\beta_i=\alpha_i/a$, $i=1,\ldots,n$.

\begin{theorem}\label{vhatcombexp}
	If $W$ and $B_r$ are combinations of exponential random variables with densities given in \eqref{combexp} and \eqref{combexpB}, then we have
	\begin{equation}\label{Vcombexp}
		\hat{V}(u,t) = \sum_{k=1}^n C_k e^{-r_k u}+u+t\,\sum_{i=1}^n A_i\left(\frac{\lambda}{\beta_i}-\frac{\mu_d}{\alpha_i}\right),
	\end{equation}
	where $r_1,\dots,r_n$ are the solutions with positive real parts of 
	\begin{equation} \label{VLund}
		\quad 
		\lambda \sum_{i=1}^n A_i \frac{\beta_i}{\beta_i+ r}+\mu_d \sum_{i=1}^n A_i \frac{\alpha_i}{\alpha_i -r} -(\lambda+\mu_d+1/t)=0
	\end{equation}
	and 
	\begin{equation}\label{VCk}
		\quad C_k= \frac{\sum_{j=1}^nB_j  \prod\limits_{h=1}^n(\alpha_j-r_h)\prod\limits_{i=1,i\neq j}^n\frac{r_k-\alpha_i}{\alpha_j-\alpha_i}}{\prod\limits_{h=1,h\neq k}^n(r_k-r_h)}, \quad k=1,\ldots,n	
	\end{equation}
	with $$B_j = \frac{1}{\alpha_j^2}-\frac{t}{\alpha_j}\sum_{i=1}^n A_i\left(\frac{\lambda}{\beta_i}-\frac{\mu_d}{\alpha_i}\right), \quad j=1,\ldots,n.$$
\end{theorem}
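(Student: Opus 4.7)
The strategy is to reduce \eqref{inteq} to an algebraic system by direct substitution of the ansatz \eqref{Vcombexp}. Since the inhomogeneity $u/t$ is affine in $u$, I first look for a particular solution of the form $P(u,t) = u + d(t)$; substituting this into \eqref{inteq}, the coefficients of $u$ cancel identically and matching the constant term forces
\[
d(t) = t\bigl(\lambda\,\mathbb{E}[B_r] - \mu_d\,\mathbb{E}[W]\bigr) = t\sum_{i=1}^n A_i\!\left(\frac{\lambda}{\beta_i}-\frac{\mu_d}{\alpha_i}\right),
\]
which is precisely the affine part appearing in \eqref{Vcombexp}.

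For the homogeneous part, I substitute the full ansatz $\hat V(u,t) = \sum_{k=1}^n C_k e^{-r_k u} + u + d(t)$ (extended by $\hat V=0$ for $u<0$) into \eqref{inteq}. The forward integral yields $\lambda\sum_k C_k e^{-r_k u}\,\tilde f_{B_r}(r_k)$, while the key computation on the backward side is the truncated convolution
\[
\int_0^u e^{-r_k(u-w)}f_W(w)\,dw = \sum_{i=1}^n A_i\,\frac{\alpha_i}{\alpha_i-r_k}\bigl(e^{-r_k u}-e^{-\alpha_i u}\bigr),
\]
which splits into a part matching the ansatz and spurious $e^{-\alpha_i u}$ pieces. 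Setting the coefficients of $e^{-r_k u}$ to zero produces, for each $k$, exactly the generalized Lundberg equation \eqref{VLund}, so the exponents $r_k$ must be roots of that equation.

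Clearing denominators in \eqref{VLund} gives a polynomial of degree $2n$, and the main technical obstacle is showing that exactly $n$ of its roots have positive real part, which is what gives the ansatz the correct number of free constants. The linear growth bound $0\le\hat V(u,t)\le u+(\lambda\mathbb{E}[B_r]-\mu_d\mathbb{E}[W])t$ from Proposition \ref{propV} immediately excludes any exponent with non-positive real part; the matching lower bound on the number of admissible roots I would obtain by a Rouch\'e-type argument on a suitable contour in the right half-plane, in the same spirit as the root count carried out in the deterministic case of Theorem \ref{th3.4}.

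With $r_1,\dots,r_n$ fixed, the $n$ unknown constants $C_k$ are pinned down by requiring that the coefficients of the $n$ exponentials $e^{-\alpha_i u}$ arising from the convolution of $\sum_k C_k e^{-r_k u}$ with $f_W$, together with the $e^{-\alpha_i u}$ contributions from $\mu_d\int_0^u (u-w)f_W(w)\,dw$ and $\mu_d d(t)F_W(u)$, all cancel. Careful bookkeeping reduces these $n$ conditions to the linear system
\[
\sum_{k=1}^n \frac{C_k}{\alpha_i-r_k} \;=\; \frac{1}{\alpha_i^2} - \frac{d(t)}{\alpha_i} \;=:\; B_i, \qquad i=1,\ldots,n.
\]
Reading the left-hand side as the partial-fraction expansion of $P(x)\big/\prod_h(x-r_h)$ evaluated at $x=\alpha_i$, this system is equivalent to the polynomial interpolation problem $P(\alpha_i) = B_i\prod_h(\alpha_i-r_h)$ with $\deg P<n$. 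Lagrange interpolation at the nodes $\alpha_1,\dots,\alpha_n$ yields $P$ in closed form, and then reading off $C_k = P(r_k)/\prod_{h\neq k}(r_k-r_h)$ produces exactly formula \eqref{VCk}.
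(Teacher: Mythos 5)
Your proposal is correct and follows essentially the same route as the paper's proof: an affine particular solution, the exponential ansatz whose $e^{-r_k u}$ coefficients yield the generalized Lundberg equation \eqref{VLund}, exclusion of roots with negative real part via the linear boundedness from Proposition \ref{propV}, and the Cauchy-type system $\sum_{k}C_k/(\alpha_i-r_k)=B_i$ solved in closed form --- your Lagrange-interpolation step is precisely the rational-function representation trick of \cite[Sec.4]{albrecher2010direct} that the paper invokes. The only substantive difference is that the paper cites \cite{zhang} for the fact that \eqref{VLund} has exactly $n$ roots with positive real part, whereas you defer this to a sketched Rouch\'e-type argument; that is legitimate but is the one step you have not actually carried out.
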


\begin{proof}
	Substituting \eqref{combexp} and \eqref{combexpB} into \eqref{inteq}, we get 
	\begin{multline*}	\lambda\,\sum_{i=1}^n A_i\beta_i\int_0^\infty\hat{V}(u+b_r,t) e^{-\beta_i b_r}db_r\\-(\lambda+\mu_d+{1}/{t})\hat{V}(u,t)+\mu_d\,\sum_{i=1}^n A_i\alpha_i \int_0^u \hat{V}(u-w,t) e^{-\alpha_i w}dw+{u}/{t}=0,\quad u\ge 0.
	\end{multline*}
	The function $\hat{V}(u,t)$ then has the form	\[\hat{V}(u,t) = \sum_{k=1}^n C_k e^{-r_k u}+d_1 u+d_0,\]
	for constants $C_1,\ldots,C_n,r_1,\ldots,r_n, d_0,d_1$ to be determined. In fact, plugging this ansatz into the above equation shows that  comparing coefficients of $e^{-r_k u}$ exactly gives \eqref{VLund} (which is a generalized Lundberg equation in the terminology of ruin theory, cf.\ \cite{asmussen2010ruin}). That equation has exactly $n$ solutions with positive real part $r_1,\ldots,r_n$ and $n$ solutions with negative real part (see e.g.\ \cite{zhang}). The solutions with negative real part would enter $\hat{V}$ with positive real part and are correspondingly irrelevant for our purpose, as that would violate the linear boundedness of the resulting $\hat{V}$ (in other words, the coefficients in front of such terms need to be zero). Comparing coefficients of $e^{-\alpha_i u}$, $i=1,\ldots,n$ gives
		\begin{equation}\label{VCk_0}
		\sum_{k=1}^n\frac{C_k}{\alpha_i-r_k}=\frac{d_1}{\alpha_i^2}-\frac{d_0}{\alpha_i}, \quad i=1,\ldots,n.
	\end{equation}
	Coefficients in front of $u\,e^{-\alpha_i u}$, $i=1,\ldots,n$ all cancel. After a little algebra, one sees that a comparison of coefficients of $u$ in that equation establish $d_1=1$ and  a comparison of the constant coefficients gives
	$$d_0=t\,\sum_{i=1}^n A_i\left(\frac{\lambda}{\beta_i}-\frac{\mu_d}{\alpha_i}\right).$$ 
	These obtained values of $d_1$ and $d_0$ can now be plugged into \eqref{VCk_0}, and the resulting system of linear equations can be solved explicitly to give \eqref{VCk}, either by realizing that the coefficient matrix is a Cauchy matrix or by using the trick of rational function representation developed in \cite[Sec.4]{albrecher2010direct}. 
\end{proof}

\begin{example} A particular simple example of the above is the case where $W$ is exponentially distributed with parameter $\alpha$ and $B_r$ is exponentially distributed with parameter $\beta$. In that case $n=1$ in Theorem \ref{vhatcombexp} and we obtain	
	\begin{equation}\label{Vcombexpe}
		\hat{V}(u,t) = \left(\frac{1}{\alpha^2}-\frac{t}{\alpha}\left(\frac{\lambda}{\beta}-\frac{\mu_d}{\alpha}\right)\right)  (\alpha-R) e^{-R u}+u+t\, \left(\frac{\lambda}{\beta}-\frac{\mu_d}{\alpha}\right),
	\end{equation}
	where $R$ is the (unique) solution with positive real part of 
	\begin{equation} \label{VLunde}
		\quad 
		\lambda  \,\frac{\beta}{\beta+ r}+\mu_d\,  \frac{\alpha}{\alpha -r} -(\lambda+\mu_d+1/t)=0.
	\end{equation}
\end{example}

Let us now move on to study the ruin probability $\widehat{\psi}(u,t)=\mathbb{E}\left[\psi(u,T)\right]$ in the present context.

\begin{theorem}\label{psihatcombexp}
	If $W$ and $B_r$ are combinations of exponential random variables with densities given in \eqref{combexp} and \eqref{combexpB}, then we have
	\begin{equation}\label{psicombexp}
		\widehat{\psi}(u,t) = \sum_{k=1}^n D_k e^{-r_k u},
	\end{equation}
	where $r_1,\dots,r_n$ are the $n$ solutions with positive real parts of \eqref{VLund}
	and 
	\begin{equation}\label{psiCk}
		\quad D_k= \frac{\sum_{j=1}^n\frac{1}{\alpha_j}\,  \prod\limits_{h=1}^n(\alpha_j-r_h)\prod\limits_{i=1,i\neq j}^n\frac{r_k-\alpha_i}{\alpha_j-\alpha_i}}{\prod\limits_{h=1,h\neq k}^n(r_k-r_h)}, \quad k=1,\ldots,n.
	\end{equation}
\end{theorem}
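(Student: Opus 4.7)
The plan is to mirror the proof of Theorem \ref{vhatcombexp} very closely, replacing the inhomogeneous ODE-style analysis with the homogeneous one but paying careful attention to the boundary condition $\widehat{\psi}(u,t)=1$ for $u<0$, which is the only substantive difference from the $\hat V$ case.

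First I would repeat the small-time conditioning argument used in Proposition \ref{propV}: for small $h>0$, decompose on whether a jump occurs in $(0,h]$ and whether the exponential horizon $T$ has elapsed, but without the ``no-jump-and-$T\le h$'' term that produced the $u/t$ forcing in \eqref{inteq} (since $\widehat{\psi}$ measures whether ruin happened, not the surplus at time $T$). Taking the derivative at $h=0$ yields the integral equation
\begin{equation*}
\lambda\int_0^\infty\widehat{\psi}(u+b_r,t)\,dF_{B_r}(b_r)-(\lambda+\mu_d+1/t)\,\widehat{\psi}(u,t)+\mu_d\int_0^\infty\widehat{\psi}(u-w,t)\,dF_W(w)=0.
\end{equation*}
The key point is now to split the last integral at $w=u$ and use $\widehat{\psi}(u-w,t)=1$ on $w>u$, which produces an inhomogeneous term $\mu_d\,\overline{F}_W(u)$. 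When $f_W$ is a combination of exponentials this tail simplifies to $\overline{F}_W(u)=\sum_{i=1}^n A_i e^{-\alpha_i u}$, so the equation has an effective right-hand side that is again a pure combination of $e^{-\alpha_i u}$.

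Next I would substitute the ansatz $\widehat{\psi}(u,t)=\sum_{k=1}^n D_k e^{-r_k u}$ into the integral equation, imitating the calculation in Theorem \ref{vhatcombexp}. Matching coefficients of $e^{-r_k u}$ produces exactly the Lundberg-type condition \eqref{VLund}, so the exponents $r_k$ are the same positive-real-part roots already identified. Matching coefficients of $e^{-\alpha_i u}$, the $\mu_d A_i$ contributions from the $\int_0^u$ piece combine with those from $\overline{F}_W(u)$ to yield the linear system
\begin{equation*}
\sum_{k=1}^n \frac{D_k}{\alpha_i-r_k}=\frac{1}{\alpha_i},\qquad i=1,\ldots,n,
\end{equation*}
which is the exact analogue of \eqref{VCk_0} with the right-hand side $d_1/\alpha_i^2-d_0/\alpha_i$ replaced by $1/\alpha_i$ (consistent with the absence of the linear and constant parts in $\widehat{\psi}$).

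Finally, the linear boundedness argument in the proof of Theorem \ref{vhatcombexp}, with the upper bound $\widehat{\psi}\le 1$ replacing the earlier linear bound, rules out roots of \eqref{VLund} with negative real part and justifies the $n$-term ansatz. The Cauchy-matrix inversion trick from \cite[Sec.~4]{albrecher2010direct} then inverts the above system explicitly, producing \eqref{psiCk} by simply replacing the quantity $B_j$ of Theorem \ref{vhatcombexp} with $1/\alpha_j$. The main obstacle, as in Theorem \ref{vhatcombexp}, is keeping track of the algebra that turns the Cauchy system into the closed form \eqref{psiCk}; but since the only change versus the $\hat V$ case is the right-hand side, this reduces to re-running the same Lagrange/partial-fraction computation with the new data, and no essentially new calculation is needed.
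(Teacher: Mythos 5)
Your proposal is correct and follows essentially the same route as the paper: the same small-$h$ conditioning leading to the integral equation with the tail term $\mu_d(1-F_W(u))=\mu_d\sum_i A_ie^{-\alpha_i u}$ (the paper simply writes the $\int_u^\infty 1\,dF_W(w)$ contribution directly in the conditioning step rather than splitting afterwards), the same exponential ansatz with the positive-real-part roots of \eqref{VLund}, the same Cauchy system $\sum_k D_k/(\alpha_i-r_k)=1/\alpha_i$, and the same explicit inversion yielding \eqref{psiCk}. No gaps.
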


\begin{proof}
We can proceed in the same way as in the proof of Proposition \ref{propV} to derive an  integral equation for the ruin probability. The analogue of Equation \eqref{neu0} here is	\begin{equation}\label{psi_h}
	\begin{split}
		\widehat{\psi}(u,t)&=e^{-(\frac{1}{t}+\lambda+\mu_d)h}\widehat{\psi}(u,t) + \lambda\int_0^he^{-\lambda s} e^{-({1}/{t}+\mu_d) s} \int_0^\infty\widehat{\psi}(u+b_r,t)\,dF_{B_r}(b_r)\,ds\\
		&  +\mu_d \int_0^he^{-\mu_d s} e^{-({1}/{t}+\lambda) s}\left(\int_0^u \widehat{\psi}(u-w,t) \,dF_W(w)+\int_u^\infty 1 \,dF_W(w)\right)\,ds.
	\end{split}
\end{equation}
Taking the derivative w.r.t.\ $h$ and evaluating at $h=0$ then gives
\begin{equation} \label{psiinteq}
	\lambda\int_0^\infty\widehat{\psi}(u+b_r,t)\,dF_{B_r}(b_r)-(\lambda+\mu_d+{1}/{t})\widehat{\psi}(u,t)+\mu_d\int_0^u \widehat{\psi}(u-w,t) \,dF_W(w)+\mu_d(1-F_W(u))=0,\quad u\ge 0.
\end{equation}
Here the boundary conditions are $\widehat{\psi}(u,t)=1$ for $u<0$ and $\widehat{\psi}(u,t)\le 1$ for $u\ge 0$ and arbitrary $t>0$, and uniqueness of its solution follows analogously to Theorem \ref{vhatcombexp}. Under the assumptions on $F_{B_r}$ and $F_W$ this reads
\begin{multline} \label{psiinteqcombexp}
	\lambda\sum_{i=1}^n A_i\beta_i \int_0^\infty\widehat{\psi}(u+b_r,t)\,e^{-\beta_i b_r}\,db_r-(\lambda+\mu_d+{1}/{t})\widehat{\psi}(u,t)\\+\mu_d\sum_{i=1}^n A_i\alpha_i\int_0^u \widehat{\psi}(u-w,t) \, e^{-\alpha_i w}\,dw+\mu_d\sum_{i=1}^n A_ie^{-\alpha_i u}=0,\quad u\ge 0.
		\end{multline}
In analogy to the proof of Theorem \ref{vhatcombexp} we then see that the ruin probability must have the form
\[\widehat{\psi}(u,t) = \sum_{k=1}^n D_k e^{-r_k u}\]
for constants $D_1,\ldots,D_n$ to be determined, and $r_1,\ldots,r_n$ being the $n$ positive solutions of \eqref{VLund}. The constants $D_k$ are now obtained by substituting the above expression into \eqref{psiinteqcombexp} and comparing coefficients of $e^{-\alpha_i u}$, $i=1,\ldots,n$. This gives
\begin{equation}\label{VDk_0}
	\sum_{k=1}^n\frac{D_k}{\alpha_i-r_k}=\frac{1}{\alpha_i}, \quad i=1,\ldots,n.
\end{equation}
This system of linear equations is again of Cauchy matrix form with explicit solution \eqref{psiCk}, establishing the result.
\end{proof}

\begin{example} If $W$ and $B_r$ are exponentially distributed with parameter $\alpha$ and $\beta$, respectively, then \eqref{psicombexp} simplifies to	
	\begin{equation}\label{psiexpe}
		\widehat{\psi}(u,t) = (1-R/\alpha)  e^{-R u},\;u\ge 0,
	\end{equation}
	where $R$ is the (unique) solution with positive real part of \eqref{VLunde}. \\
	Note that for $t\to\infty$ one obtains $R=(\lambda\alpha-\mu_d\beta)/(\lambda+\mu_d)>0$, so that
	\begin{equation}\label{psiexpoinf}
	{\psi}(u) = \frac{\mu_d(1+\beta/\alpha)}{\lambda+\mu_d} e^{-\frac{\lambda\alpha-\mu_d\beta}{\lambda+\mu_d} u},\;u\ge 0.
\end{equation}
In particular, without initial capital in the pool, the infinite-time ruin probability amounts to 
\[{\psi}(0) = \frac{\mu_d(1+\beta/\alpha)}{\lambda+\mu_d} ,\]
in accordance with Formula (8.1) in \cite{albrecher2010direct}. 
\end{example}
%

\section{Individual miner analysis}\label{sec:minerdet0}

\subsection{Deterministic rewards}\label{sec:minerdet}
Comparing the formula describing the miner's surplus under the PPS pooling scheme \eqref{eq:surplus_miner_pps} with the solo-mining surplus \eqref{eq:miner_surplus_solo_mining}, one can see that they are in fact the same type of process, only distinguished by the reward amount and frequency. Correspondingly, the formulas obtained by Albrecher and Goffard \cite{albrecher2020profitability} for the expected value of the surplus and the ruin probability of a honest miner apply in the PPS case with deterministic rewards. Adapted to the present context, we hence get:
%
%
\begin{theorem}\cite{albrecher2020profitability}
For the miner's surplus process $R_t^i = u - c_i\cdot t + M^i_t\cdot w,\text{ }t\geq0$, with $M^i_t \sim Poisson(p_i\mu t)$, the value function $\hat{V}(u,t)$ can be expressed as

\begin{equation}
\hat{V}(u,t) = u + (p_i\mu w-c_i)t(1-e^{\rho^*u}),
\end{equation}

where $\rho^*$ is the negative solution of the equation 

\begin{equation} \label{eq:rho}
-c_i\rho+p_i\mu(e^{w\rho}-1)=1/t.
\end{equation}

\end{theorem}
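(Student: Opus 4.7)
The plan is to adapt the conditioning approach of Theorem \ref{th3.4} to a process with continuous downward drift and pure upward Poisson jumps. First, I would condition on what happens during a small interval $(0,h)$, splitting into the events: (i) no jump occurs and $T > h$, so the surplus drifts linearly to $u - c_i h$; (ii) the exponential clock rings at some $s \in (0,h)$ before any jump, contributing the stopped surplus $u - c_i s$; (iii) an upward jump of size $w$ occurs at time $s \in (0,h)$ before the clock rings, restarting the process from $u - c_i s + w$. Combining the three contributions gives
\begin{align*}
\hat{V}(u,t) &= e^{-(1/t + p_i\mu)h}\,\hat{V}(u - c_i h, t) + \tfrac{1}{t}\int_0^h e^{-(1/t + p_i\mu)s}(u - c_i s)\,ds \\
&\quad + p_i\mu\int_0^h e^{-(1/t + p_i\mu)s}\,\hat{V}(u - c_i s + w, t)\,ds,
\end{align*}
and differentiating at $h = 0$ yields the integro-differential equation
\[
-c_i\,\partial_u \hat{V}(u,t) - (1/t + p_i\mu)\,\hat{V}(u,t) + p_i\mu\,\hat{V}(u+w,t) + u/t = 0, \qquad u \ge 0,
\]
together with $\hat{V}(u,t) = 0$ for $u < 0$ and a linear upper bound in $u$ and $t$.

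I would then try the ansatz $\hat{V}(u,t) = u + K(1 - e^{\rho u})$ with $K := (p_i\mu w - c_i)\,t$, motivated by the fact that $u + K = \mathbb{E}(R_T)$ is the unconstrained expected surplus while the exponential term corrects for ruin. Substituting into the integro-differential equation, the $u$-linear terms cancel identically, the constant terms cancel because $K/t = p_i\mu w - c_i$, and matching the coefficient of $e^{\rho u}$ reduces, after factoring out $K$, to the Lundberg-type condition
\[
-c_i\rho + p_i\mu(e^{w\rho} - 1) = 1/t,
\]
which is precisely equation \eqref{eq:rho}.

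To finish, I would show that this characteristic equation has a unique negative root $\rho^*$. Writing $g(\rho) := -c_i\rho + p_i\mu(e^{w\rho}-1)$, strict convexity follows from $g''(\rho) = p_i\mu w^2 e^{w\rho} > 0$; furthermore $g(0) = 0$, $g'(0) = p_i\mu w - c_i > 0$ under the net profit condition, and $g(\rho) \to +\infty$ as $\rho \to \pm\infty$. Hence the level $1/t > 0$ is attained exactly once on each of $(-\infty, 0)$ and $(0, \infty)$. The main obstacle is to rule out contributions from the positive root: an $e^{\rho u}$ term with $\rho > 0$ would blow up as $u \to \infty$ and thus violate the linear bound on $\hat{V}$, forcing its coefficient to vanish. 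The exponential correction therefore reduces to the single admissible term with exponent $\rho^* < 0$, establishing the stated formula.
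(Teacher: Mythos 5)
Your proof is correct and is essentially the same method this paper applies to its neighbouring results (the small-$h$ conditioning used for Theorem \ref{th3.4} and the dual-model integro-differential equation of Theorem \ref{th_Vminer}); the paper itself does not reprove this statement but imports it from \cite{albrecher2020profitability}. Your verification of the ansatz, the convexity analysis giving exactly one negative and one positive root of \eqref{eq:rho}, and the exclusion of the positive root via the linear bound are all sound; the only minor points are that in this dual model ruin occurs by continuous passage through zero, so the natural boundary condition is $\hat{V}(0,t)=0$ (which your ansatz satisfies) rather than $\hat{V}(u,t)=0$ for $u<0$, and that, as in the paper's own analogous proofs, uniqueness of the linearly bounded solution of the equation (ruling out further, e.g.\ complex, characteristic roots) is used implicitly rather than argued in detail.
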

%

\begin{theorem}\cite{albrecher2020profitability}
For the same surplus process, the ruin probability with exponential time horizon is given by $
\hat{\psi}(u,t) = e^{\rho^*u}$,
where $\rho^*$ is the negative solution of \eqref{eq:rho}.
\end{theorem}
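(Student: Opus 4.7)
The plan is to adapt, in simplified form, the infinitesimal–conditioning argument used in the proofs of Theorem \ref{th3.4} and its ruin--probability analogue. The miner's surplus has continuous negative drift $-c_i$ and positive jumps of size $w$ at the arrivals of a Poisson process of intensity $p_i\mu$, so in contrast to the pool manager's process ruin may occur at any time between jumps and the natural boundary is placed at $u=0$ rather than at $u<0$.

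First I would condition on the activity in a short interval $(0,h)$, exploiting the independence of the horizon $T\sim\text{Exp}(1/t)$ and the jump process together with their memorylessness. Three disjoint scenarios need to be tracked: (i) neither a jump nor $T$ occurs in $(0,h)$, with probability $e^{-(p_i\mu+1/t)h}$, after which we restart from $u-c_ih$ with a fresh exponential horizon; (ii) $T\le h$ is the first event, in which case $u-c_is>0$ for all $s\le h$ so $\tau>T$ and the contribution to $\widehat{\psi}$ vanishes; (iii) a jump is the first event, at some time $s\in(0,h)$, after which we restart from $u-c_is+w$. Combining these gives
\[
\widehat{\psi}(u,t)=e^{-(p_i\mu+1/t)h}\widehat{\psi}(u-c_ih,t)+\int_0^h p_i\mu\,e^{-(p_i\mu+1/t)s}\widehat{\psi}(u-c_is+w,t)\,ds+o(h),
\]
and differentiating in $h$ at $h=0$ yields the integro-differential equation
\[
c_i\,\partial_u\widehat{\psi}(u,t)+\bigl(p_i\mu+1/t\bigr)\widehat{\psi}(u,t)=p_i\mu\,\widehat{\psi}(u+w,t),\qquad u\ge 0.
\]

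Next I would try the ansatz $\widehat{\psi}(u,t)=C e^{\rho u}$; the equation then collapses exactly to the Lundberg--type equation \eqref{eq:rho}. A short convexity argument on $\rho\mapsto -c_i\rho+p_i\mu(e^{w\rho}-1)$ (value $0$ at $\rho=0$, limits $+\infty$ as $\rho\to\pm\infty$) shows that for every $1/t>0$ there is precisely one negative and one positive root; the a priori bound $0\le\widehat{\psi}\le 1$ on $[0,\infty)$ rules out the positive root and selects $\rho^*$. The constant $C$ is pinned down by the boundary condition $\widehat{\psi}(0,t)=1$, which holds because starting from $u=0$ the continuous downward drift makes $R_t<0$ for every $t\in(0,S_1)$, so that $\tau=0$ almost surely.

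The main obstacle is not any individual algebraic step but the verification that the bounded solution of the integro-differential equation is unique, so that the exponential ansatz captures \emph{the} ruin probability rather than merely \emph{a} solution of the equation. This is standard in ruin theory, via the strong Markov property together with an exponential Lundberg martingale as in \cite{asmussen2010ruin}, but must be invoked explicitly to close the argument; everything else reduces to algebra already familiar from the pool--manager case in Section \ref{secd}.
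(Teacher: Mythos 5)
Your derivation is correct, and it is worth noting that the paper itself does not prove this statement: it is quoted from \cite{albrecher2020profitability}. Your infinitesimal conditioning on $(0,h)$ is exactly the technique the paper uses elsewhere (Theorem \ref{th3.4}, Proposition \ref{propV}, and the stochastic-reward miner in Section \ref{sec:minersto}), and the algebra checks out: the advanced-argument equation $c_i\,\partial_u\widehat{\psi}+(p_i\mu+1/t)\widehat{\psi}=p_i\mu\,\widehat{\psi}(u+w,t)$ with the ansatz $Ce^{\rho u}$ gives precisely \eqref{eq:rho}, the convexity argument correctly yields one negative and one positive root, and $\widehat{\psi}(0,t)=1$ is right because this process can only be ruined by the continuous drift, so from $u=0$ ruin is immediate.

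On the one issue you flag, uniqueness/verification: this is indeed the only real gap in the ansatz route, and your proposed fix works and can be made completely explicit. Since $\rho^*$ solves \eqref{eq:rho}, the process $\exp\{\rho^* R^i_s - s/t\}$ is a martingale; it is bounded by $1$ up to $\tau$ (as $R^i_s\ge 0$ there and $\rho^*<0$), ruin occurs by continuous passage so $R^i_\tau=0$, and optional stopping plus letting the horizon grow gives $\mathbb{E}\bigl[e^{-\tau/t}\mathbb{I}_{\tau<\infty}\bigr]=e^{\rho^* u}$; combined with the independence identity $\widehat{\psi}(u,t)=\mathbb{P}(\tau\le T)=\mathbb{E}\bigl[e^{-\tau/t}\mathbb{I}_{\tau<\infty}\bigr]$ this proves the claim without any appeal to uniqueness of solutions of the functional differential equation. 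This martingale/Laplace-transform route is in fact the one the paper takes for the stochastic-reward analogue (via Mazza and Rulli\`ere \cite{Mazza2004}); note in passing that the displayed $\widehat{\psi}(u,t)=\mathbb{E}[e^{\tau/t}]$ there has a sign typo and should read $\mathbb{E}[e^{-\tau/t}]$, consistent with what you use.
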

%

\subsection{Stochastic rewards}\label{sec:minersto}

Consider now the same surplus process as in the previous section, but with stochastic rewards. Let us define this process by 

\begin{equation}
R_t^i = u - c_i\cdot t + \sum_{n=1}^{M_t^i} W_n,\text{ }t\geq0,
\end{equation}

where we assume $W_n,\ n\in \mathbb{N}$ to be i.i.d.\ positive random variables with cumulative distribution function $F_W$ and finite mean and $M^i_t \sim Poisson(p_i\mu t)$ as previously. This type of process is denominated as the \textit{dual problem} in the insurance context, see e.g.\ \cite{Avanzi2007}. We assume that the net profit condition $p_i\mu\mathbb{E}[W_n]>c_i$ is satisfied.\\
We are again interested in deriving the expected value of the surplus and the ruin probability for the miner. To simplify the computations, we consider again an exponential time horizon.
\begin{theorem}\label{th_Vminer}
For exponential time horizon, the expected value of the miner's surplus $\hat{V}(u,t)$ can be expressed as the solution of the integro-differential equation 
\begin{equation}\label{eq_mineride}
c_i\hat{V}'(u,t)+(\frac{1}{t}+p_i\mu)\hat{V}(u,t)-p_i\mu\int_0^{+\infty}\hat{V}(u+w,t)dF_W(w)-u/t=0,
\end{equation}
with boundary conditions $\hat{V}(0,t)=0$ and $0\leq \hat{V}(u,t) \leq u-c_it+p_i\mu\mathbb{E}[W]$.
\end{theorem}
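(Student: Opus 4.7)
The proof will follow the same infinitesimal conditioning strategy used in Theorem \ref{th3.4} and Proposition \ref{propV}, exploiting the lack of memory of the exponential time horizon $T\sim\mathrm{Exp}(1/t)$. The only structural difference with the pool-manager setting of Proposition \ref{propV} is that now (i) there are no downward jumps, the surplus instead decays continuously at rate $c_i$, and (ii) the upward jumps have random sizes $W_n$ with distribution $F_W$. Consequently, the analogue of the integral equation \eqref{inteq} will feature a differential term $c_i\hat V'(u,t)$ arising from the deterministic drift, which turns the equation into an integro-differential one.

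Concretely, for a small $h>0$ I would partition the behaviour of the process on $(0,h)$ into three scenarios whose complement has probability $o(h)$: (a) no upward jump in $(0,h)$ and $T>h$; (b) no upward jump and $T\le h$; (c) exactly one upward jump at some time $s\in(0,h)$ with $T>s$. In each case, the process either terminates at the random horizon or restarts from its current position by the memoryless property. Assembling the three contributions, in direct analogy with \eqref{neu0}, one obtains
\[\hat V(u,t)=e^{-(1/t+p_i\mu)h}\,\hat V(u-c_i h,t)+\frac{1}{t}\int_0^h e^{-(1/t+p_i\mu)s}(u-c_i s)\,ds+p_i\mu\int_0^h e^{-(1/t+p_i\mu)s}\!\int_0^\infty\!\hat V(u-c_i s+w,t)\,dF_W(w)\,ds+o(h).\]
Differentiating both sides with respect to $h$ and evaluating at $h=0$ yields
\[0=-(1/t+p_i\mu)\,\hat V(u,t)-c_i\hat V'(u,t)+u/t+p_i\mu\int_0^\infty\hat V(u+w,t)\,dF_W(w),\]
which is exactly \eqref{eq_mineride} after rearrangement.

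For the boundary conditions, $\hat V(0,t)=0$ follows because, starting at $u=0$, the deterministic drift forces $R_r=-c_i r<0$ for every $r$ before the first jump, so that $\tau=0$ almost surely and the event $\{\tau>T\}$ has probability zero. The linear upper bound is obtained by dropping the indicator and computing $\E[R_T]=u-c_i t+p_i\mu t\,\E[W]$, with nonnegativity being immediate from the definition. The only mild technical point is that the passage from the conditioning identity to the integro-differential equation implicitly requires $\hat V(\cdot,t)$ to be differentiable in $u$; as in \cite{albrecher2010direct,asmussen2010ruin}, this smoothness can either be taken a priori on the class of admissible solutions or justified a posteriori by exhibiting an explicit smooth solution of \eqref{eq_mineride} satisfying the stated boundary behaviour and invoking uniqueness, which is the route pursued in the explicit computations of the following sections.
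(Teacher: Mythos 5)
Your proposal is correct and follows essentially the same route as the paper: condition on the first event (jump or horizon) in a small interval $(0,h)$, use the memoryless property of $T$, differentiate at $h=0$ to obtain \eqref{eq_mineride}, and obtain $\hat{V}(0,t)=0$ from immediate ruin under the downward drift. Your explicit computation of the upper bound, $u-c_it+p_i\mu t\,\E[W]$, is in fact the correct value of $\E[R_T]$ (the factor $t$ on the last term is missing in the theorem's statement), so if anything your treatment of the boundary conditions is slightly more detailed than the paper's.
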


\begin{proof}
As in previous sections, by conditioning the occurrence of $T$ to a small time interval $(0,h)$, we can write the value function as 
\begin{equation}
\begin{split}
\hat{V}(u,t) & = e^{-h(\frac{1}{t}+p_i\mu)}\hat{V}(u-c_ih,t) + \int_0^h \frac{1}{t} e^{-s(\frac{1}{t}+p_i\mu)}(u-c_is)ds\\
& + \int_0^h p_i\mu e^{-s(\frac{1}{t}+p_i\mu)} \int_0^{+\infty} \hat{V}(u-c_ih+w,t)dF_W(w)ds.
\end{split}
\end{equation}
Taking the derivative w.r.t.\ $h$ and evaluating it at $h=0$ gives us \eqref{eq_mineride}. The boundary condition follows from ruin considerations.
\end{proof}
For rewards being distributed as a combination of exponential random variables \eqref{combexp}, we can refine Theorem \ref{th_Vminer}.
\begin{theorem}
When $W$ has density $f_W(w)=\sum_{j=1}^n A_j\alpha_j e^{-\alpha_j w}$, $w>0$, then
\begin{equation}\label{eq_Vhatminercomb}
\hat{V}(u,t) = t\left(c_i-p_i\mu\sum_{j=1}^n\frac{A_j}{\alpha_j}\right)e^{-Ru}+u+t\left(p_i\mu\sum_{j=1}^n\frac{A_j}{\alpha_j}-c_i\right),\ u>0,
\end{equation}
where $R$ is the unique solution with positive real part of the equation
\begin{equation*}
c_iR+p_i\mu\sum_{j=1}^n\frac{A_j\alpha_j}{R+\alpha_j}-(\frac{1}{t}+p_i\mu)=0.
\end{equation*} 
\end{theorem}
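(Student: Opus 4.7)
The plan is to insert the ansatz $\hat V(u,t) = C\,e^{-Ru} + d_1 u + d_0$, motivated by Theorem \ref{vhatcombexp} and the deterministic-reward version, into the integro-differential equation \eqref{eq_mineride} of Theorem \ref{th_Vminer}, and to pin down the four constants $C,R,d_0,d_1$ by coefficient matching followed by the boundary condition $\hat V(0,t)=0$.

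First, substituting $f_W(w)=\sum_{j=1}^n A_j\alpha_j e^{-\alpha_j w}$ into the integral term and using $\sum_{j=1}^n A_j = 1$ together with the elementary identities $\int_0^\infty e^{-(R+\alpha_j)w}\,dw = 1/(R+\alpha_j)$, $\int_0^\infty e^{-\alpha_j w}\,dw = 1/\alpha_j$ and $\int_0^\infty w\,e^{-\alpha_j w}\,dw = 1/\alpha_j^2$, I would compute
\[\int_0^{+\infty}\hat V(u+w,t)\,f_W(w)\,dw = C\,e^{-Ru}\sum_{j=1}^n\frac{A_j\alpha_j}{R+\alpha_j} + d_1 u + d_1\sum_{j=1}^n\frac{A_j}{\alpha_j} + d_0.\]
Plugging this together with $\hat V'(u,t) = -R\,C\,e^{-Ru}+d_1$ back into \eqref{eq_mineride} and collecting separately the terms in $e^{-Ru}$, $u$ and constants yields three conditions: the coefficient of $u$ forces $d_1=1$, the constant coefficient gives $d_0 = t\,\bigl(p_i\mu\sum_{j=1}^n A_j/\alpha_j - c_i\bigr)$, and the coefficient of $e^{-Ru}$ is exactly the characteristic equation announced in the theorem. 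The boundary condition $\hat V(0,t)=0$ from Theorem \ref{th_Vminer} then forces $C+d_0=0$, so that $C = t\,\bigl(c_i-p_i\mu\sum_{j=1}^n A_j/\alpha_j\bigr)$, which recovers \eqref{eq_Vhatminercomb}.

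The main obstacle is to justify that one exponential suffices and that the root $R$ is unambiguously the unique solution with positive real part. For existence and uniqueness of the positive real root I would study the function
\[h(R) := c_iR + p_i\mu\sum_{j=1}^n \frac{A_j\alpha_j}{R+\alpha_j} - (1/t+p_i\mu)\]
on $(0,\infty)$: since $h''(R) = 2\,p_i\mu\sum_{j=1}^n A_j\alpha_j/(R+\alpha_j)^3 > 0$, the function is strictly convex, while $h(0) = -1/t < 0$ and $h(R)\to +\infty$ as $R\to\infty$ (the net profit assumption $p_i\mu\sum_j A_j/\alpha_j > c_i$ being consistent with $h'(0)<0$), so a unique positive real root exists. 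To see that no further exponential term is needed, I would appeal to uniqueness of the linearly bounded solution of the integral equation \eqref{eq_mineride} satisfying $\hat V(0,t)=0$: clearing denominators, the characteristic equation is a polynomial of degree $n+1$ (matching the order of the ODE obtained by applying $\prod_{j=1}^n (D-\alpha_j)$ to eliminate the integral in \eqref{eq_mineride}); the remaining $n$ roots have negative real part and the corresponding exponentials $e^{-R_k u}$ would grow, contradicting the upper bound $\hat V(u,t)\le u-c_it+p_i\mu t\,\mathbb{E}[W]$ of Theorem \ref{th_Vminer}, so their coefficients must vanish. Verifying directly that the candidate \eqref{eq_Vhatminercomb} satisfies the IDE, the boundary condition and the linear growth constraint then identifies it as the unique such solution.
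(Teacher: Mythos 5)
Your proposal is correct and follows essentially the same route as the paper: substitute the ansatz $C e^{-Ru}+d_1u+d_0$ into \eqref{eq_mineride}, compare coefficients of $u$, the constants, and $e^{-Ru}$ to obtain $d_1=1$, $d_0=t\bigl(p_i\mu\sum_j A_j/\alpha_j-c_i\bigr)$ and the characteristic equation, then use $\hat V(0,t)=0$ to fix $C=-d_0$; the paper simply cites \cite{Lu2010} for the unique root with positive real part, whereas you supply a convexity argument and a boundedness argument excluding further exponential terms, which is a welcome addition. One small caution: since the $A_j$ need not all be positive, the positivity of $h''(R)=2p_i\mu\sum_{j}A_j\alpha_j/(R+\alpha_j)^3$ is best justified by identifying this sum with $p_i\mu\,\mathbb{E}\bigl[W^2e^{-RW}\bigr]>0$ (using $f_W\ge 0$) rather than termwise.
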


\begin{proof}
Equation \eqref{eq_mineride} translates into
\begin{equation}\label{eq_mineridecomb}
c_i\hat{V}'(u,t)+(\frac{1}{t}+p_i\mu)\hat{V}(u,t)-p_i\mu\sum_{j=1}^n A_j\alpha_j \int_0^{+\infty}\hat{V}(u+w,t)e^{-\alpha_j w}dw-u/t=0.
\end{equation}
This equation has a solution of the form
\begin{equation}
\hat{V}(u,t)=C e^{-Ru}+d_1u+d_0
\end{equation}
and we plug this ansatz into \eqref{eq_mineridecomb}
\begin{equation}
\begin{split}
&c_i(-R C e^{-Ru}+d_1)+(\frac{1}{t}+p_i\mu)(C e^{-Ru}+d_1u+d_0)\\
&-p_i\mu\sum_{j=1}^n A_j\alpha_j \int_0^{+\infty}(C e^{-R(u+w)}+d_1(u+w)+d_0)e^{-\alpha_j w}dw-u/t=0.
\end{split}
\end{equation}
Comparing coefficients, we obtain
\begin{equation*}
d_1 = 1,\quad d_0 = t\left(p_i\mu\sum_{j=1}^n\frac{A_j}{\alpha_j}-c_i\right).
\end{equation*}
Further, a comparison of the coefficients in front of $e^{-Ru}$ simplifies to the following equation:
\begin{equation}
c_iR+p_i\mu\sum_{j=1}^n\frac{A_j\alpha_j}{R+\alpha_j}-(\frac{1}{t}+p_i\mu)=0.
\end{equation}
Similarly to the Lundberg equation derived in \cite{Lu2010}, we note that there is one positive root $R$ to this equation. To complete the proof, we consider the boundary condition $\hat{V}(0,t)=0$ and substituting into the ansatz gives $C=-d_0$.
\end{proof}

\begin{example}
When $W$ is exponentially distributed, i.e. $f_W(w)=\alpha e^{-\alpha w},\ w>0$, Equation \eqref{eq_Vhatminercomb} simplifies to 
\begin{equation}
\hat{V}(u,t) = t\left(c_i-\frac{p_i\mu}{\alpha}\right)e^{-Ru}+u+t\left(\frac{p_i\mu}{\alpha}-c_i\right),\ u>0,
\end{equation}
where $R$ is the solution with positive real part of 
\begin{equation*}
c_iR^2+(\alpha c_i-\frac{1}{t}-p_i\mu)R-\alpha\frac{1}{t}=0.
\end{equation*}
\end{example}
\begin{theorem}
For exponential time horizon, the miner's ruin probability can be expressed as
\begin{equation}
\hat{\psi}(u,t) = e^{-R\cdot u},
\end{equation}
where $R$ is the unique positive root of
\begin{equation}
p_i\mu + \frac{1}{t} -c_i R = p_i\mu \mathbb{E}[e^{-RW_n}].
\end{equation}
\end{theorem}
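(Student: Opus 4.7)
The plan is to follow the same conditioning approach used in the preceding proofs (e.g.\ of Proposition \ref{propV} and Theorem \ref{th_Vminer}), adapted to the ruin probability rather than the expected surplus. For small $h>0$ I would partition the time interval $(0,h)$ according to: no jump of $M^i$ and the exponential time horizon satisfies $T>h$ (yielding deterministic drift by $-c_i h$); no jump and $T\leq h$ (in which case no ruin has occurred for $u>0$ and small $h$, so this contributes $0$); and an upward jump of size $W$ at some time $s\in(0,h)$. Collecting these scenarios gives
\begin{equation*}
\hat{\psi}(u,t) = e^{-h(1/t+p_i\mu)}\hat{\psi}(u-c_i h,t) + \int_0^h p_i\mu\, e^{-s(1/t+p_i\mu)} \int_0^\infty \hat{\psi}(u-c_i s+w,t)\,dF_W(w)\,ds + o(h).
\end{equation*}

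Differentiating with respect to $h$ and evaluating at $h=0$ would yield the integro-differential equation
\begin{equation*}
c_i\,\hat{\psi}'(u,t) + \left(\tfrac{1}{t}+p_i\mu\right)\hat{\psi}(u,t) - p_i\mu\int_0^\infty \hat{\psi}(u+w,t)\,dF_W(w) = 0,\quad u>0,
\end{equation*}
with boundary conditions $\hat{\psi}(0,t)=1$ (the strictly negative drift triggers immediate ruin from zero capital) and $\hat{\psi}(u,t)\to 0$ as $u\to\infty$. I would then substitute the exponential ansatz $\hat{\psi}(u,t)=e^{-Ru}$; after cancelling the common factor $e^{-Ru}$, the equation collapses to the Lundberg-type relation $p_i\mu+1/t-c_i R = p_i\mu\,\mathbb{E}[e^{-RW_n}]$, while the boundary condition $\hat{\psi}(0,t)=1$ is satisfied automatically.

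The main remaining task is to justify that this equation has a unique positive root. Define $g(R) = p_i\mu + 1/t - c_i R - p_i\mu\,\mathbb{E}[e^{-RW_n}]$. I would observe that $g(0)=1/t>0$, that $g$ is concave (since $R\mapsto\mathbb{E}[e^{-RW_n}]$ is convex), that $g'(0) = -c_i+p_i\mu\,\mathbb{E}[W_n]>0$ by the net profit condition stated in Section \ref{sec:minersto}, and that $g(R)\to-\infty$ as $R\to\infty$ since $\mathbb{E}[e^{-RW_n}]\to 0$ for positive $W_n$. Consequently $g$ rises from $1/t$, attains a maximum, and then strictly decreases to $-\infty$, giving exactly one positive root $R$, from which $\hat{\psi}(u,t)=e^{-Ru}$ follows.

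The main obstacle is less the computation and more the \emph{uniqueness} of this exponential solution among bounded functions on $[0,\infty)$ that satisfy the integro-differential equation with the prescribed boundary values. This can be handled by a standard renewal-type argument analogous to the one underlying Theorems \ref{vhatcombexp} and \ref{psihatcombexp}: the growth constraint $\hat{\psi}(u,t)\to 0$ as $u\to\infty$ rules out any contribution corresponding to roots with nonpositive real part, leaving $e^{-Ru}$ as the only admissible candidate.
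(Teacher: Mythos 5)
Your route is genuinely different from the paper's. The paper does not derive an equation at all: it cites Example 2 of Mazza and Rulli\`ere \cite{Mazza2004}, according to which the Laplace transform of the ruin time in the dual model is $\mathbb{E}[e^{-s\tau}]=e^{-R(s)u}$ with $R(s)$ the unique positive root of $p_i\mu+s-c_iR=p_i\mu\,\mathbb{E}[e^{-RW_n}]$, and then simply observes that for an independent horizon $T\sim\mathrm{Exp}(1/t)$ one has $\hat{\psi}(u,t)=\mathbb{P}(\tau\le T)=\mathbb{E}[e^{-\tau/t}]$, so the claim follows by taking $s=1/t$. You instead rederive everything from scratch by the small-$h$ conditioning argument, exactly parallel to the paper's own treatment of $\hat{V}$ in Theorem \ref{th_Vminer}: your integro-differential equation is the homogeneous counterpart of \eqref{eq_mineride} (correctly without a ``jump causes ruin'' term, since in the dual model ruin occurs only through the drift), your boundary value $\hat{\psi}(0,t)=1$ is right, the exponential ansatz does collapse the equation to the stated Lundberg relation, and your concavity argument for a unique positive root (with $g(0)=1/t>0$ and $g\to-\infty$) is fine. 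What the paper's route buys is that identifying $\hat{\psi}$ with the Laplace transform of $\tau$ sidesteps any uniqueness question for the integro-differential equation; what your route buys is self-containedness and consistency with the derivations elsewhere in the paper.

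The one step you should tighten is the closing uniqueness argument. Your appeal to ``ruling out roots with nonpositive real part'' is borrowed from Theorems \ref{vhatcombexp} and \ref{psihatcombexp}, but there the jump law is a combination of exponentials, so the solution space is spanned by finitely many exponentials; for a general $F_W$ (which is the setting of this theorem) the bounded solutions of your equation with $\hat{\psi}(0,t)=1$ and decay at infinity are not a priori of that form, so verifying the ansatz does not by itself identify $\hat{\psi}$. A clean patch that stays probabilistic: since the surplus has no downward jumps, a path starting from $u_1+u_2$ must pass continuously through level $u_2$ before reaching $0$, so by the strong Markov property and spatial homogeneity $\tau_{u_1+u_2}\stackrel{d}{=}\tau_{u_1}+\tau'_{u_2}$ with independent summands; combined with the memorylessness of $T$ this gives $\hat{\psi}(u_1+u_2,t)=\hat{\psi}(u_1,t)\,\hat{\psi}(u_2,t)$, hence $\hat{\psi}(u,t)=e^{-Ru}$ for some $R>0$, and your equation then pins down $R$ as the unique positive root. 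With that (or simply with the paper's identification $\hat{\psi}(u,t)=\mathbb{E}[e^{-\tau/t}]$) your argument is complete.
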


\begin{proof}
The proof is adapted from Example 2 of Mazza and Rulli{\`e}re \cite{Mazza2004}. From the latter, we have that the Laplace transform of the ruin time $\tau$ in the dual problem is $\mathbb{E}[e^{-s\tau}]=e^{-R(s)\cdot u}$, with $R(s)$ being the unique positive root of $p_i\mu + s -c_i R = p_i\mu \mathbb{E}[e^{-RW_n}]$. Since the ruin probability up to an exponential time horizon can be rewritten as 
\begin{equation}
\hat{\psi}(u,t) = \mathbb{E}[\mathbb{P}(T>\tau)\mid\tau],
\end{equation}
with $T\sim Exp(1/t)$, it immediately follows that 
\begin{equation}
\hat{\psi}(u,t) = \mathbb{E}[e^{\tau/t}]
\end{equation}
which completes the proof.
\end{proof}

\begin{example}
If $W$ is an exponential random variable, i.e. $f_W(w)=\alpha e^{-\alpha w},\; w>0$, then the ruin probability reduces to 
\begin{equation}
\hat{\psi}(u,t) = e^{-R^*u},
\end{equation}
where
\begin{equation}
R^* = \frac{{1}/{t}+p_i\mu-c_i\alpha+\sqrt{\Delta}}{2c_i},\ \Delta = (c_i\alpha -p_i\mu-{1}/{t})^2+4c_i\alpha/t.
\end{equation}
\end{example}

\begin{remark}
Results concerning the ruin probability can also be retrieved from the respective results for a more general renewal model  considered in Alcoforado et al. \cite{Alcoforado2021}.
\end{remark}

\section{Numerical illustration}\label{seci}
\subsection{Pool manager}
In this section, we will illustrate the pool dynamics in both the deterministic and stochastic setting. In addition, we will perform a sensitivity analysis on main decision variables from the pool's perspective.\\

First, let us define the set of parameters used in the following examples. For each illustration, we keep all the parameters fixed to these levels except the one that is varying : $t = 336,\ p_I = 0.1,\ q = 0.1,\ f = 0.02,\ b = 1000 MU,\ w = (1-f)bq = 98 MU,\ \lambda = 6p_I = 0.6,\ \mu_d = 6p_I(1/q-1) = 5.4$.

The units we use are hours ($h$) for the time parameters and monetary units ($MU$) for the value functions. The choice for the time horizon $t$ is equal to 2 weeks because it is linked to the period of difficulty adjustment. The monetary units are related to bitcoin in this way : $1000MU = 6.25 BTC$. The reason for this scaling is purely practical to solve the deterministic problem which involves integer constraints. As of May 28th 2021, $1BTC\approx \$35670.5$, so $1MU\approx \$231.85$.\\

Figure \ref{fig_V} compares the function $\hat{V}(u,t)$ defined in Theorem \ref{th3.4} with the Monte Carlo simulation of the mining process with deterministic and exponential time horizon fixed at the same mean parameter. The functions are reduced by $u$ to isolate the expected gain realized by the pool manager. We can see that the exact formula falls nicely within the 95\% confidence interval bounds of the MC simulations within fixed or exponential time horizon. The red line represents the upper limit of the function to which it converges as $u \to +\infty$, which is also the expected value of the gain in absence of ruin considerations. One can see that for small levels of initial capital potential ruin affects the resulting profit considerably, and for any given $u$ the pool manager can quantify the undesirable effect of ruin.  \\
Figure \ref{fig_psi} exhibits the corresponding ruin probability $\hat{\psi}(u,t)$ for the mining pool. We can note that ruin is highly non-negligible for low levels of initial capital.
Indeed, $\hat{\psi}(u,t=336)<5\%$ for $u>22594$, which is equivalent to $\$5238419$. We also see how the exponential time horizon slightly underestimates the ruin probability for low capital levels, which is due to the skewness of the exponential distribution.\\

\begin{figure}[!ht]
		\includegraphics[width = 350px]{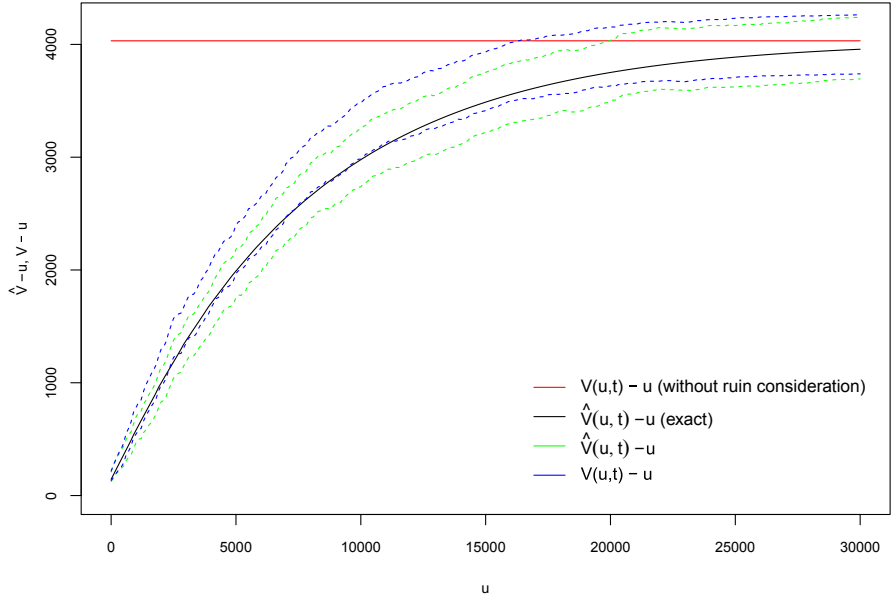}
		\captionsetup{width=0.8\textwidth}
		\centering
		\caption{$\hat{V}(u,t)-u$ as a function of $u$ and simulated $\hat{V}(u,t)-u$ and $V(u,t)-u$ with their 95\% confidence interval bound in dashed with deterministic size jumps $b$ and $w$.}
		\label{fig_V}
\end{figure}
\begin{figure}[!ht]
		\includegraphics[width = 300px]{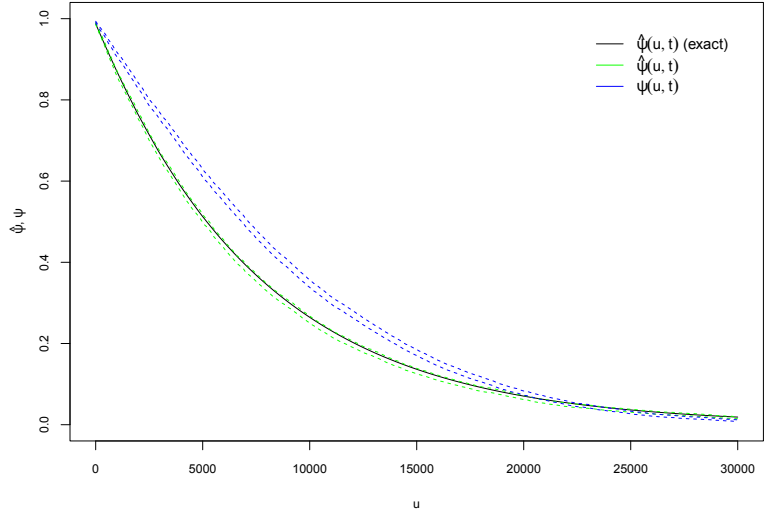}
		\captionsetup{width=0.8\textwidth}
		\centering
		\caption{$\hat{\psi}(u,t)$ as a function of $u$ and simulated $\hat{\psi}(u,t)$ and $\psi(u,t)$ with their 95\% confidence interval bound in dashed with deterministic size jumps $b$ and $w$.}
		\label{fig_psi}
\end{figure}

In Figure \ref{fig_f_deterministic}, we depict the sensitivity of the expected surplus and the ruin probability to the management fee $f$. Not surprisingly, the relationship between $f$ and $\hat{\psi}(u,t)$ is decreasing, as the pool retains more reward for itself. The parameter $f$ impacts the expected gain of the pool manager. 
\begin{figure}[!ht]
	\centering
	\captionsetup{width=0.8\textwidth}
	\begin{subfigure}{0.49\textwidth}
		\includegraphics[width = \textwidth]{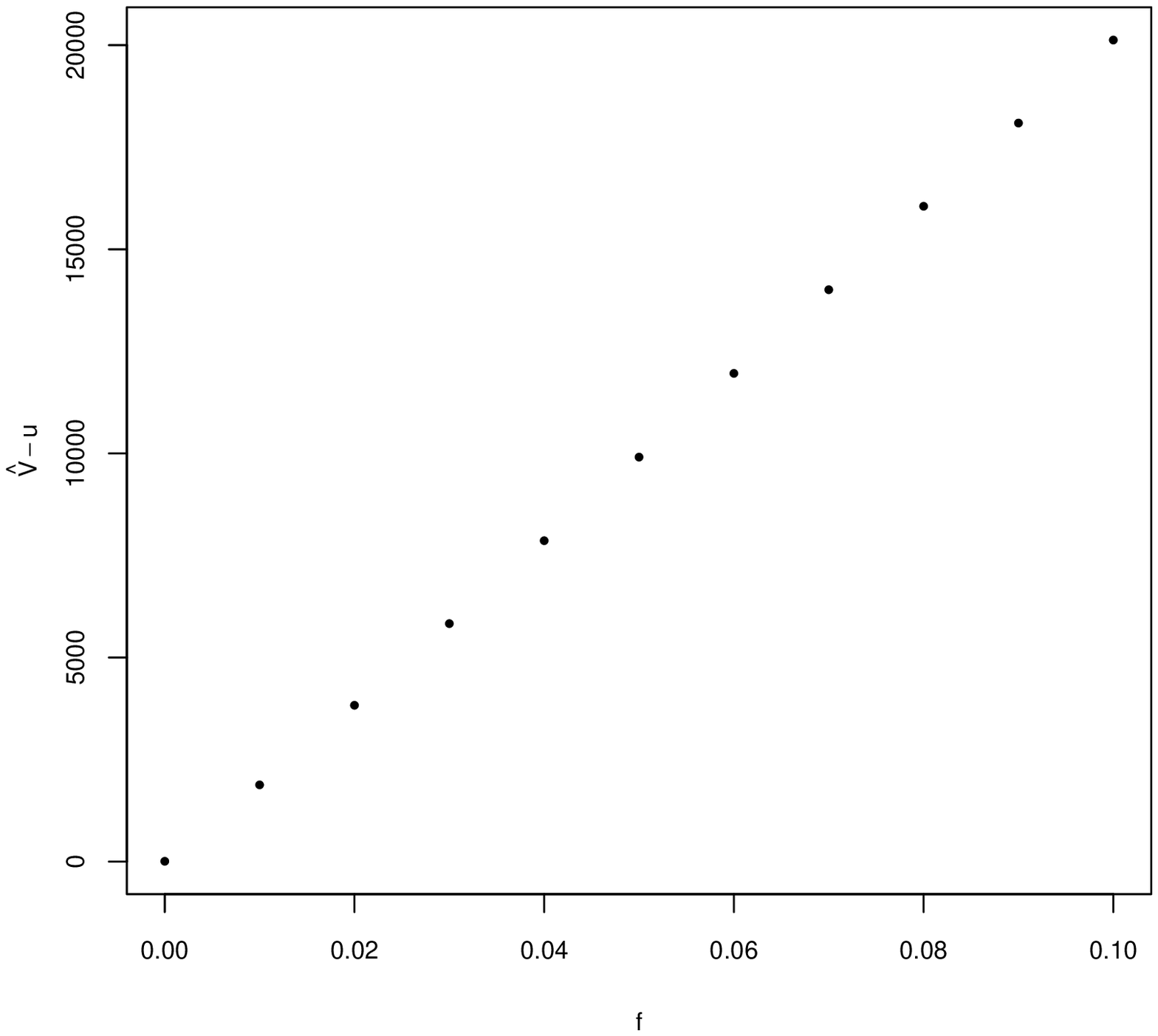}
		\caption{$\hat{V}(u,t)-u$ as a function of $f$.}
	\end{subfigure}
	\begin{subfigure}{0.49\textwidth}
		\includegraphics[width = \textwidth]{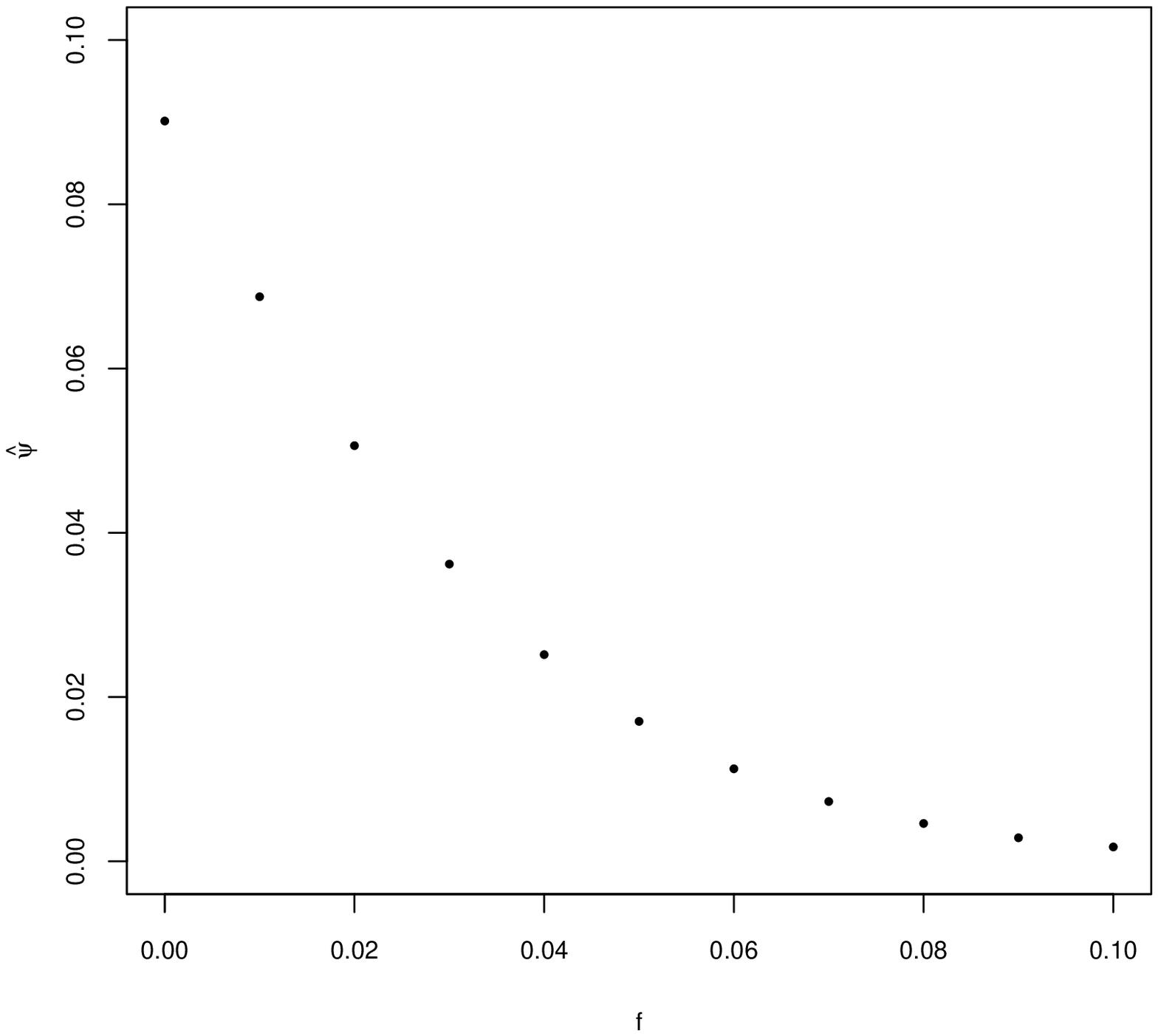}
		\caption{$\hat{\psi}(u,t)$ as a function of $f$.}
	\end{subfigure}
	\caption{Sensitivity to $f$ in case of deterministic rewards and exponential time horizon.}
	\label{fig_f_deterministic}
\end{figure}

In Figure \ref{fig_q_deterministic}, we explore the impact of the relative difficulty to find a share $q$ on ruin and expected surplus. 

\begin{figure}[!ht]
	\centering
	\captionsetup{width=0.8\textwidth}
	\begin{subfigure}{0.49\textwidth}
		\includegraphics[width = \textwidth]{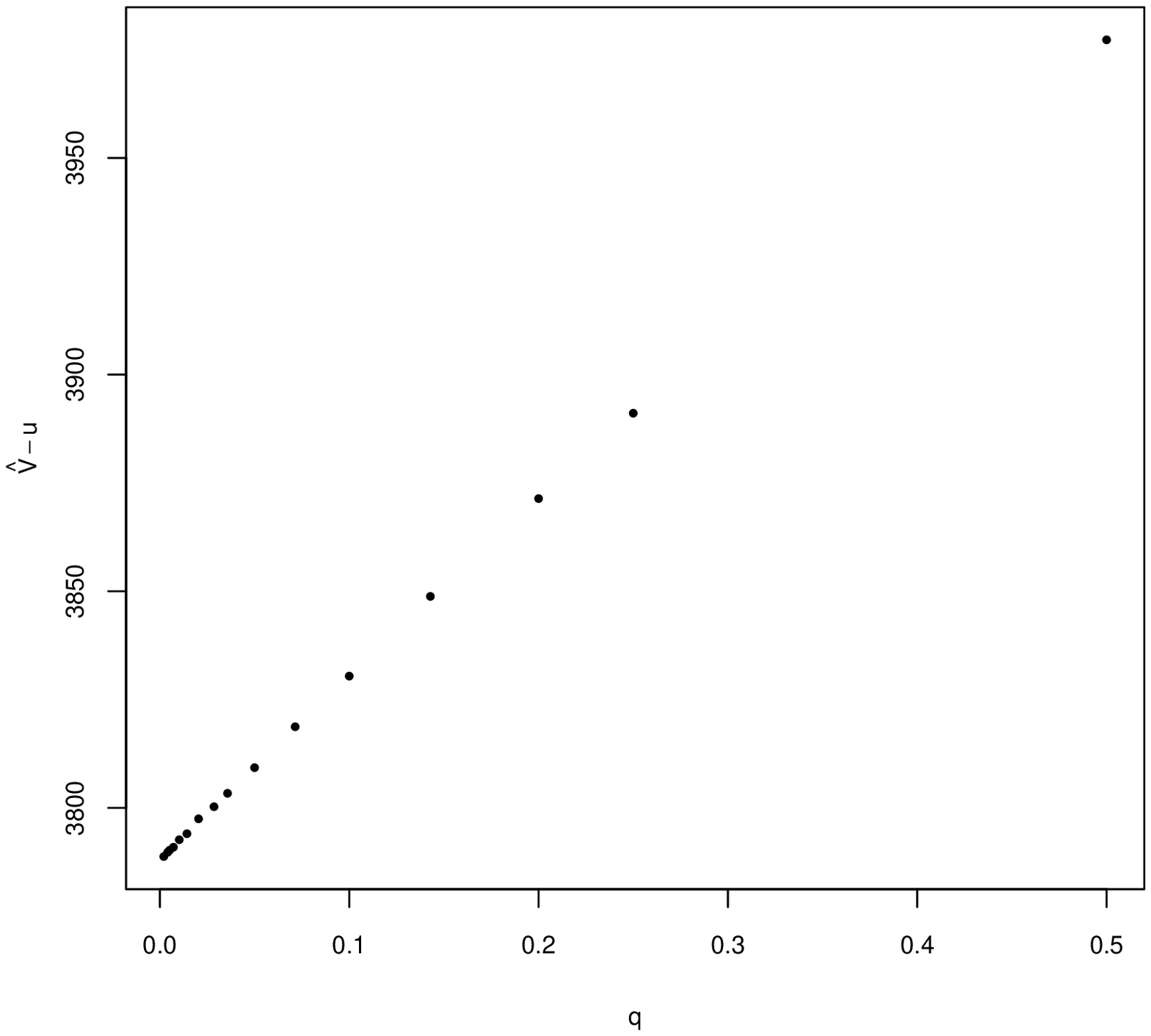}
		\caption{$\hat{V}(u,t)-u$ as a function of $q$.}
	\end{subfigure}
	\begin{subfigure}{0.49\textwidth}
		\includegraphics[width = \textwidth]{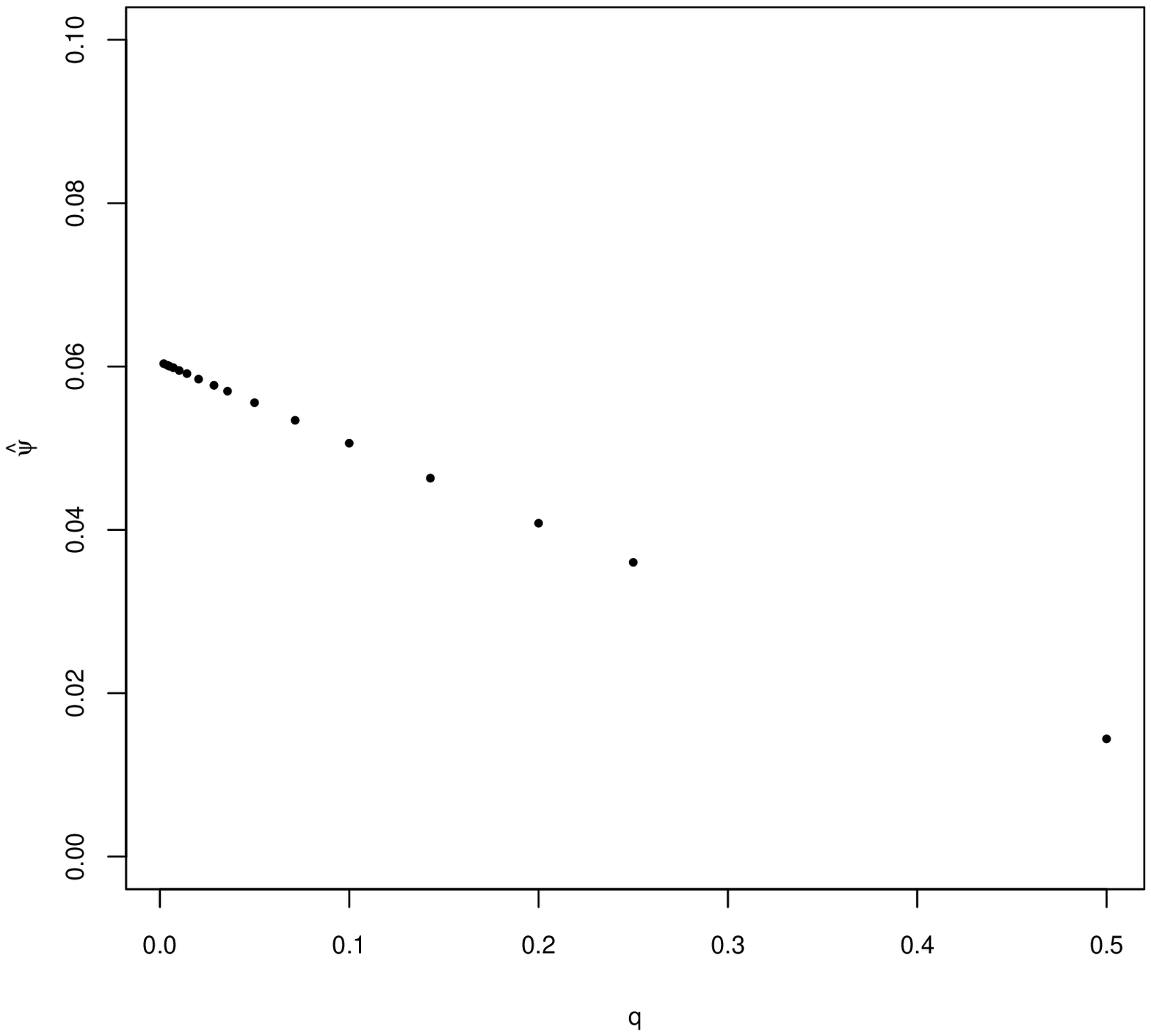}
		\caption{$\hat{\psi}(u,t)$ as a function of $q$.}
	\end{subfigure}
	\caption{Sensitivity to $q$ in case of deterministic rewards and exponential time horizon.}
	\label{fig_q_deterministic}
\end{figure}
It is worthwhile to note that increasing $q$ is profitable to the pool manager. Indeed, as $q$ increases, the payout of shares to the pool members is getting less frequent, thus the pool manager retains more liquidity and controls his probability of ruin at lower levels. The parameter $q$ adjusts the magnitude of the risk transfer between the miners and their manager. 

Figures \ref{fig_V_exp}, \ref{fig_psi_exp}, \ref{fig_f_exponential}, \ref{fig_q_exponential} illustrate the same concepts with exponentially distributed rewards. For comparison, the parameters for the exponential distributions are chosen so that the resulting mean matches the deterministic jump sizes, i.e. $\alpha = 1/w = 1/98, \beta = 1/b = 1/1000$.

\begin{figure}[!ht]
		\includegraphics[width = 300px]{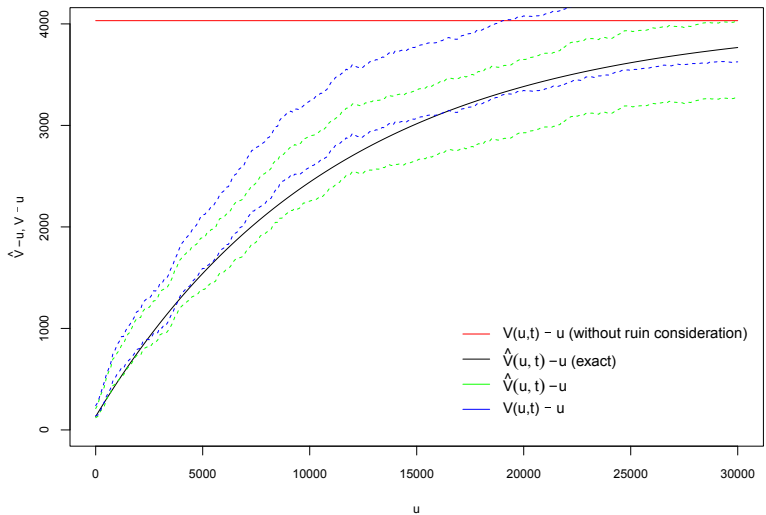}
		\captionsetup{width=0.8\textwidth}
		\centering
		\caption{$\hat{V}(u,t)-u$ as a function of $u$ and simulated $\hat{V}(u,t)-u$ and $V(u,t)-u$ with their 95\% confidence interval bound in dashed. Both jumps are exponentially distributed.}
		\label{fig_V_exp}
\end{figure} 
\begin{figure}[!ht]
		\includegraphics[width = 300px]{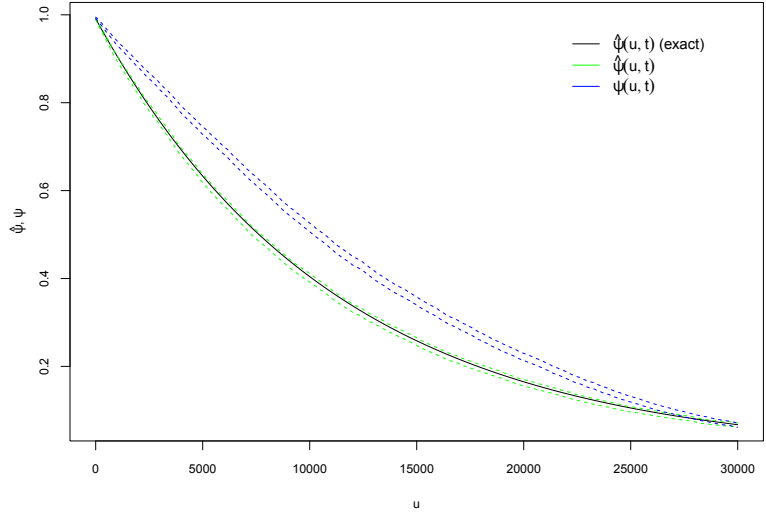}
		\captionsetup{width=0.8\textwidth}
		\centering
		\caption{$\hat{\psi}(u,t)$ as a function of $u$ and simulated $\hat{\psi}(u,t)$ and $\psi(u,t)$ with their 95\% confidence interval bound in dashed. Both jumps are exponentially distributed.}
		\label{fig_psi_exp}
\end{figure}

\begin{figure}[!ht]
	\centering
	\captionsetup{width=0.8\textwidth}
	\begin{subfigure}{0.49\textwidth}
		\includegraphics[width = \textwidth]{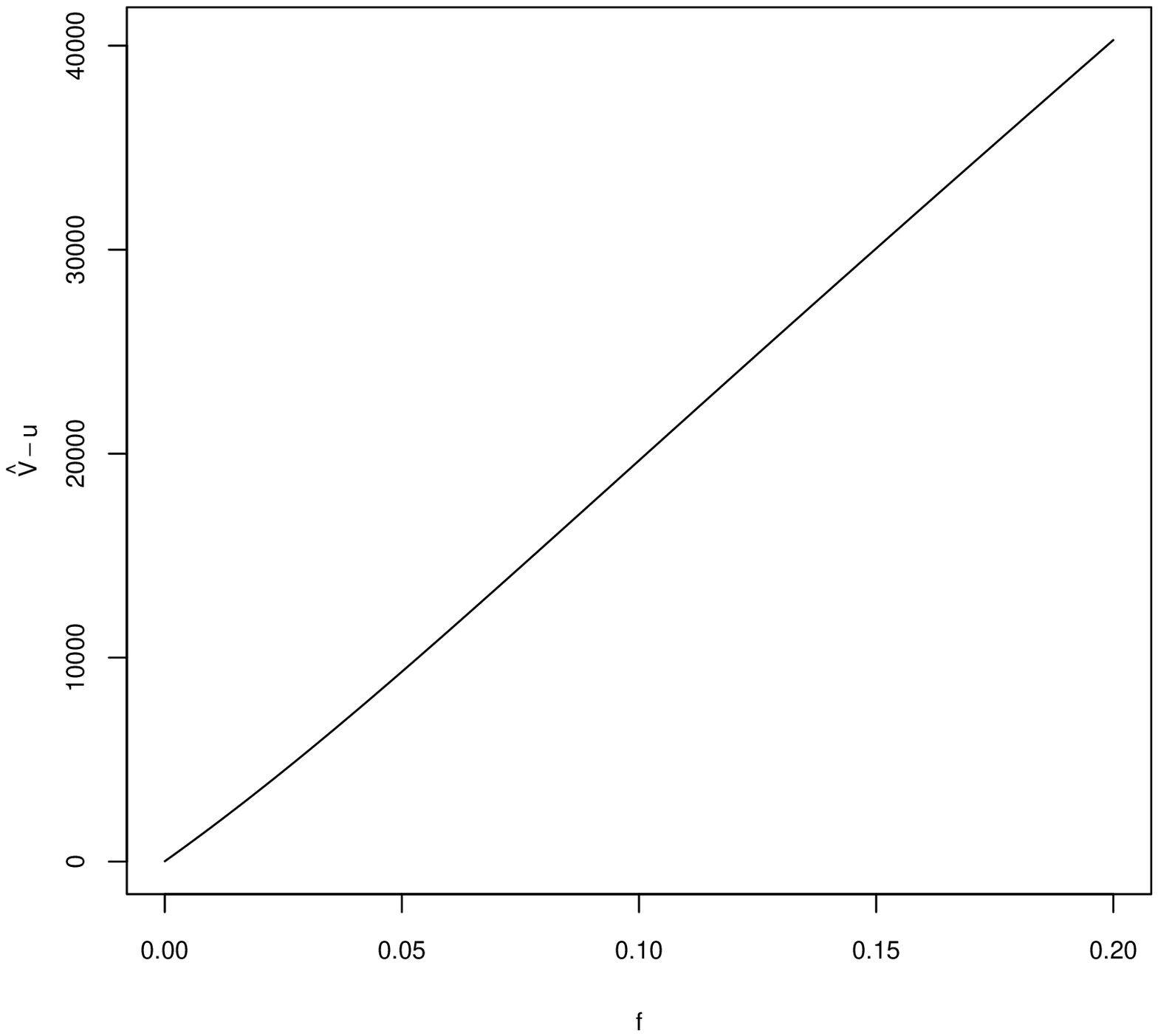}
		\caption{$\hat{V}(u,t)-u$ as a function of $f$.}
	\end{subfigure}
	\begin{subfigure}{0.49\textwidth}
		\includegraphics[width = \textwidth]{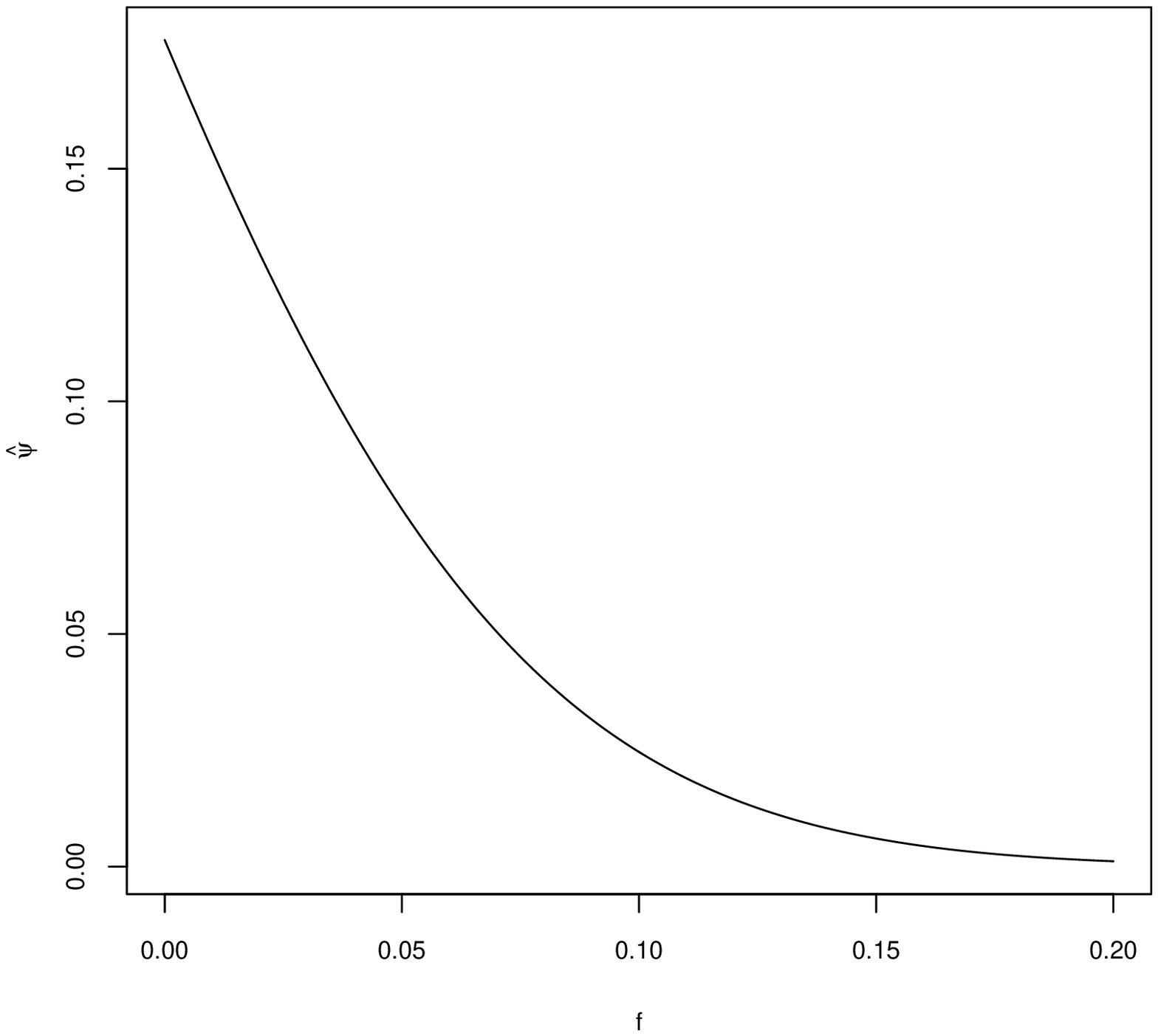}
		\caption{$\hat{\psi}(u,t)$ as a function of $f$.}
	\end{subfigure}
	\caption{Sensitivity to $f$ in case of exponentially distributed rewards and exponential time horizon.}
	\label{fig_f_exponential}
\end{figure}

\begin{figure}[!ht]
	\centering
	\captionsetup{width=0.8\textwidth}
	\begin{subfigure}{0.49\textwidth}
		\includegraphics[width = \textwidth]{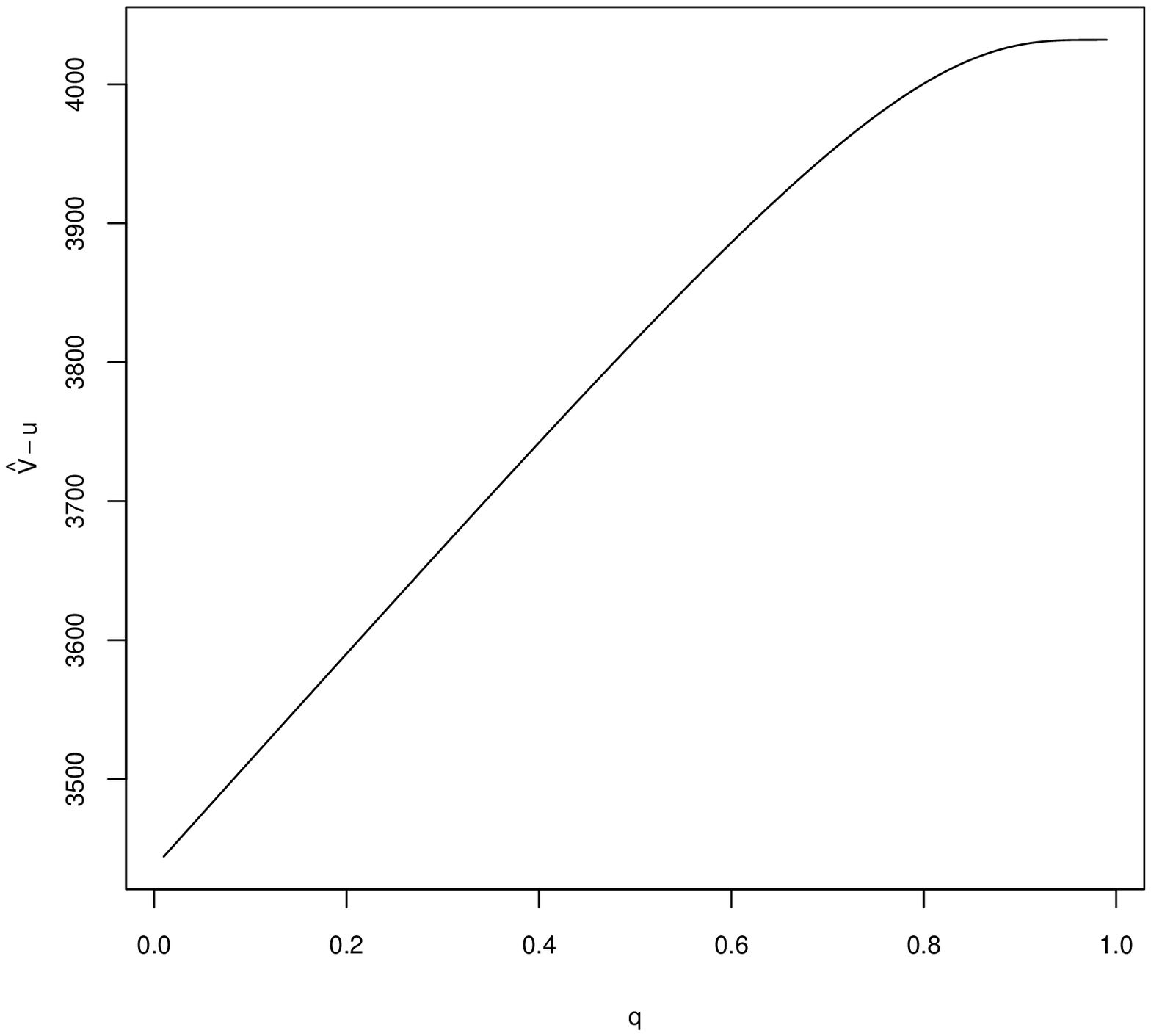}
		\caption{$\hat{V}(u,t)-u$ as a function of $q$.}
	\end{subfigure}
	\begin{subfigure}{0.49\textwidth}
		\includegraphics[width = \textwidth]{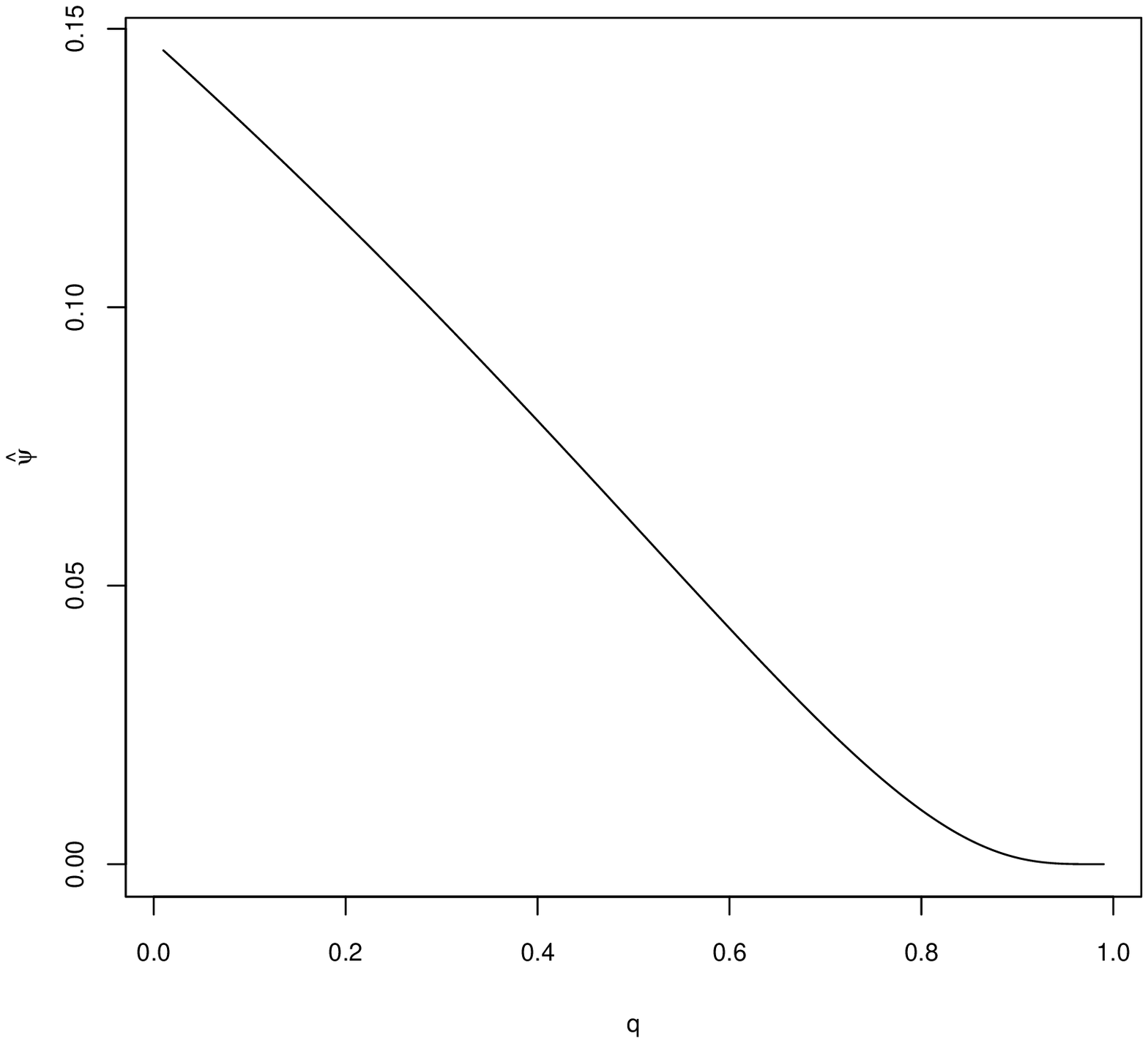}
		\caption{$\hat{\psi}(u,t)$ as a function of $q$.}
	\end{subfigure}
	\caption{Sensitivity to $q$ in case of exponentially distributed rewards and exponential time horizon.}
	\label{fig_q_exponential}
\end{figure}

Figure \ref{fig_V_p_f_exponential} gives a two-way sensitivity analysis with respect to the pool size $p_I$ and the pool fee $f$. The level curves indicate the expected profit for the pool manager for different pool sizes. 

\begin{figure}[!ht]
    \includegraphics[width = 300px]{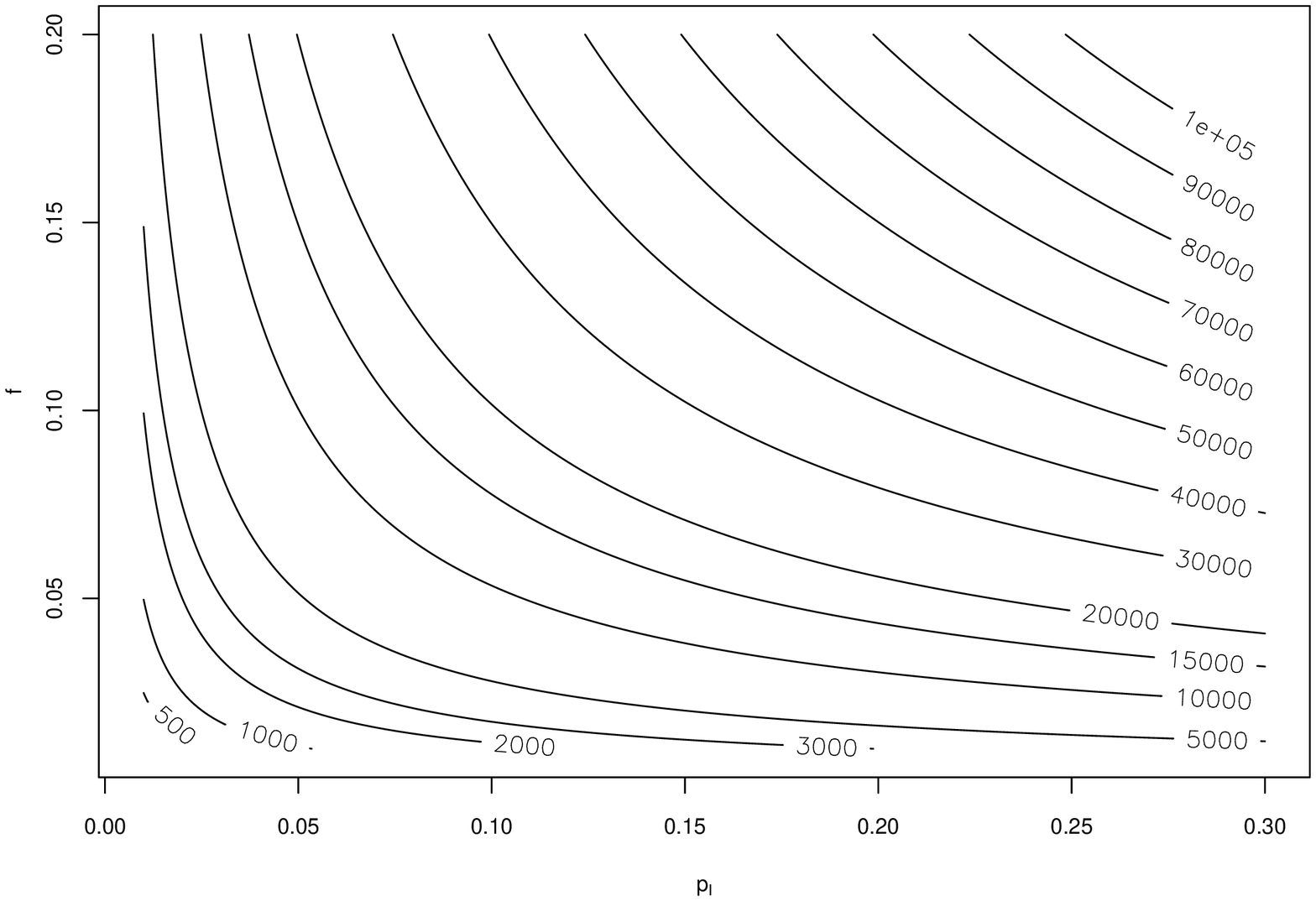}
    \captionsetup{width=0.8\textwidth}
    \centering
    \caption{$\hat{V}(u,t)$ as a function of $p_I$ and $f$ for $u=22500$. Both jumps are exponentially distributed.}
    \label{fig_V_p_f_exponential}
\end{figure}
For a bigger pool size $p_I$, in order to maintain the same level of expected profit, the pool manager can reduce the fee size. One can clearly see an inverse relationship between the pool size and the fee. Thus, a bigger pool can diminish its fees to attract more miners and thus to grow even more. This implies a threat on the decentralized nature of the consensus protocol. If a mining pool manager concentrates more than $50\%$ of the total hashpower, then the blockchain is prone to $51\%$-type attacks such as double spending in the bitcoin context. How can a smaller mining pool tackle this problem? One solution consists in offering to take on more risk by decreasing the difficulty of finding a share which reduces to decreasing the value of $q$. Figure \ref{fig_q_f_exponential_Small_Big_miner} shows the expected profit of two mining pools, one for which $p_I=0.1$ and a smaller one for which $p_I = 0.02$, both having an initial capital level $u=22500$, for both the reward and the time horizon being exponentially distributed. 
\begin{figure}[!ht]
  \centering
  \captionsetup{width=0.8\textwidth}
  \begin{subfigure}{0.49\textwidth}
    \includegraphics[width = \textwidth]{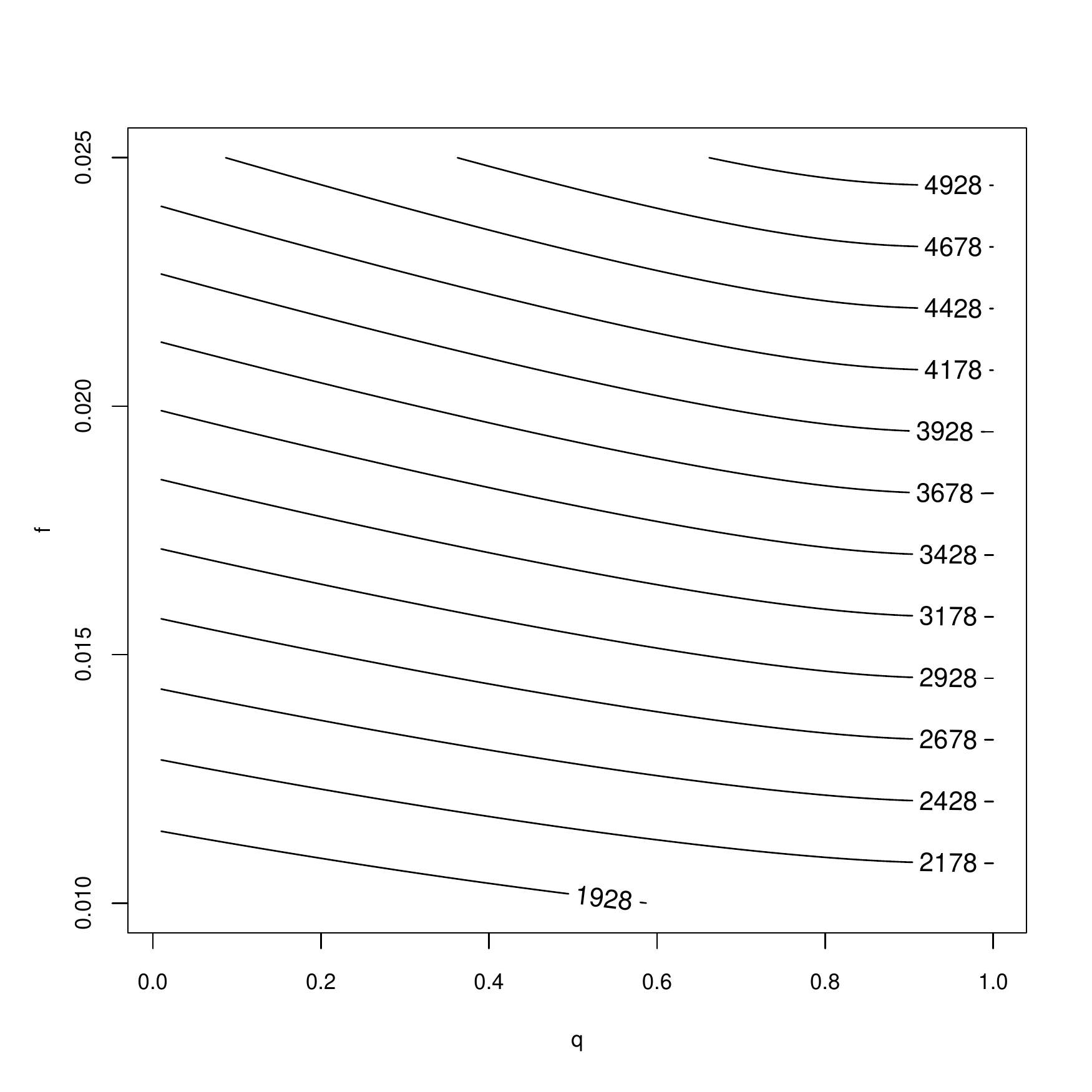}
    \caption{$\hat{V}(u,t)-u$ as a function of $q$ and $f$ for a large mining pool ($p_I = 0.1$).}
  \end{subfigure}
  \begin{subfigure}{0.49\textwidth}
    \includegraphics[width = \textwidth]{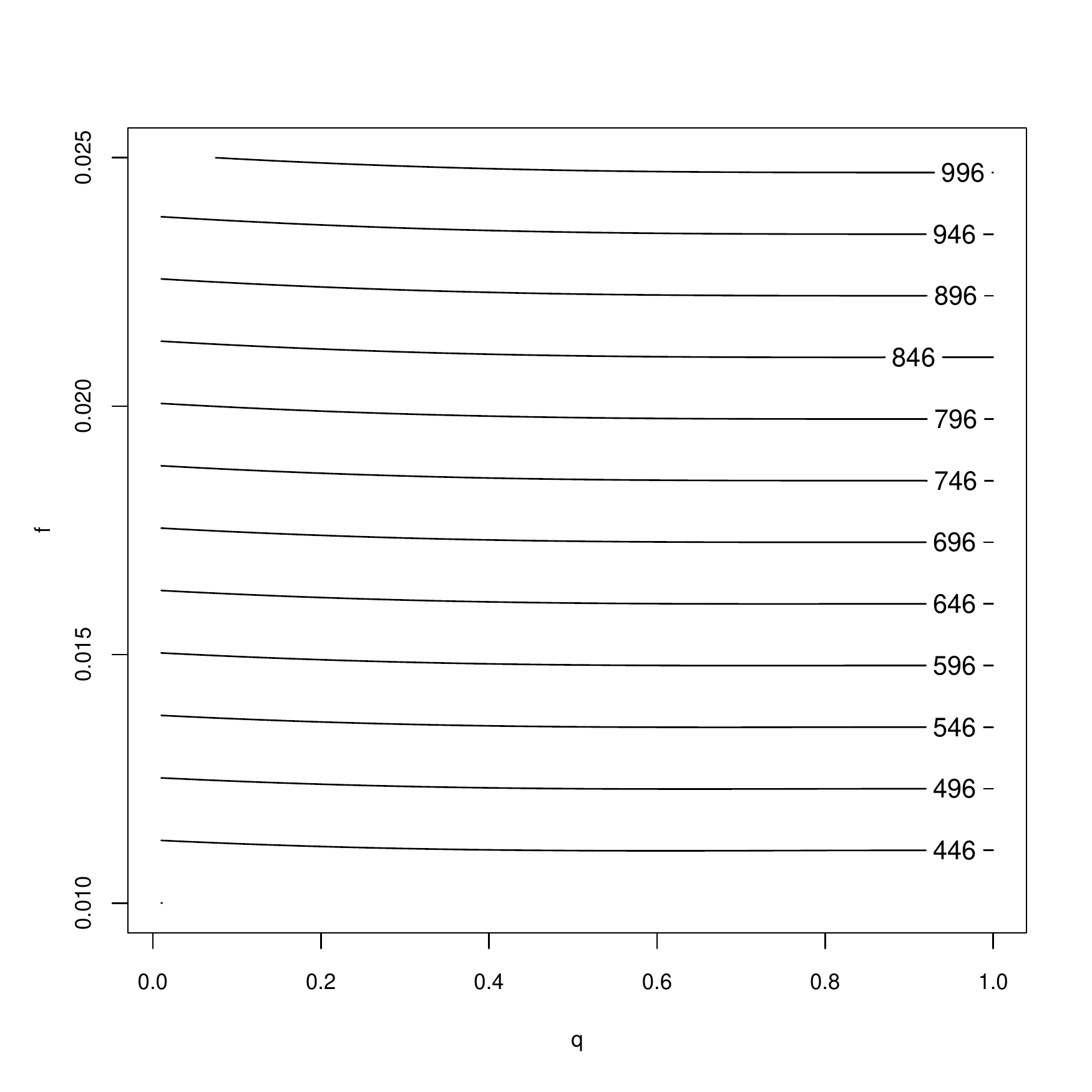}
    \caption{$\hat{V}(u,t)-u$ as a function of $q$ and $f$ for a small mining pool ($p_I = 0.02$).}
  \end{subfigure}
  \caption{Sensitivity to $q$ and $f$ of the expected profit of two mining pools of different size over and exponentially distributed time horizon and reward for $u= 22500$.}
  \label{fig_q_f_exponential_Small_Big_miner}
\end{figure}
The level curves indicate that in terms of expected profit a smaller miner may decrease $q$ without increasing the pool fee $f$, while maintaining the same level of profitability. That is not the case for the larger mining pool whose expected profit turns out to more sensitive to $q$.
\FloatBarrier

\subsection{Individual miner}

Let us now compare the situation of an individual miner before and after joining the pool. We recall Figure \ref{surplus_process} (left panel), which examplifies the pool members' surplus. Also, the surplus of the member is described by \eqref{eq:surplus_miner_fpps}. Finally, we use the results presented in Sections \ref{sec:minerdet} and \ref{sec:minersto} to assess the pool effect for the individual miner's surplus following the protocol. Consider a miner in a deterministic rewards environment. We assume a PPS pool and consider a pool member whose hashpower is equal to 1\% of the pool's total hashpower, i.e. $p_i = 0.001$. For the choice of other parameters, we assume that the cost of electricity $c$ is given by
\begin{equation*}
c = p_i\times W\times \pi_W,
\end{equation*}
where $W$ is the electricity consumption of the network expressed in kWh, and $\pi_W$ is the price of electricity per kWh. For the sake of our example, we take the estimate of $W$ as $\frac{115.541
\times 10^9}{365.25\times 24}$.\footnote{https://cbeci.org/, consulted on May 28th 2021.} The price of electricity is taken to be \$0.06, then converted to our $MU$. Therefore, the net profit condition is satisfied both with and without joining the pool. Figures \ref{fig_V_hat_minervspool} and \ref{fig_psi_hat_minervspool} illustrate the expected surplus and ruin probability with deterministic rewards and exponential time horizon. One can observe how effective the risk reduction in case of joining the pool is for the individual miner. Figure \ref{fig_psi_hat_minervspool} particularly emphasizes the drastic decrease of ruin probability for low capital levels.

\begin{figure}[!ht]
		\includegraphics[width = 300px]{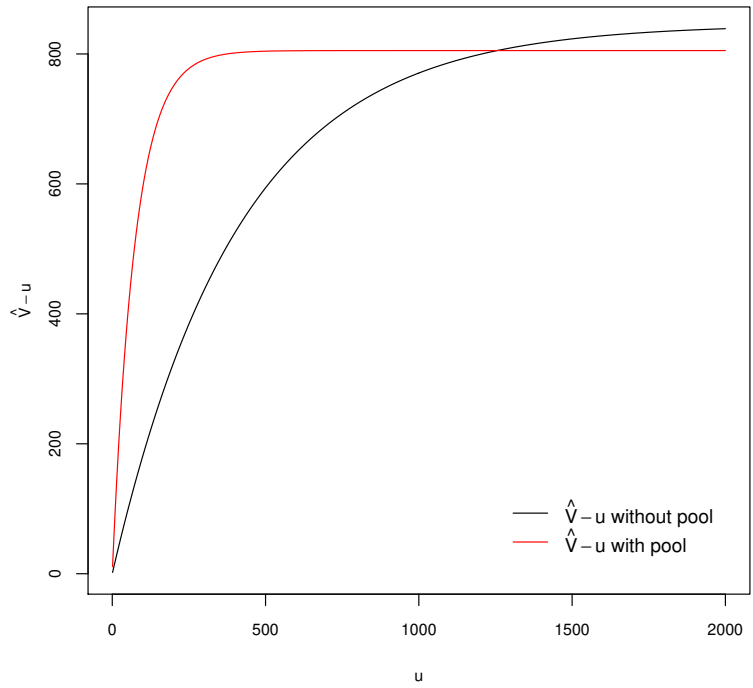}
		\captionsetup{width=0.8\textwidth}
		\centering
		\caption{$\hat{V}(u,t)-u$ as a function of $u$ for an individual pool miner alone in black and within the pool in red.}
		\label{fig_V_hat_minervspool}
\end{figure}
\begin{figure}[!ht]
		\includegraphics[width = 300px]{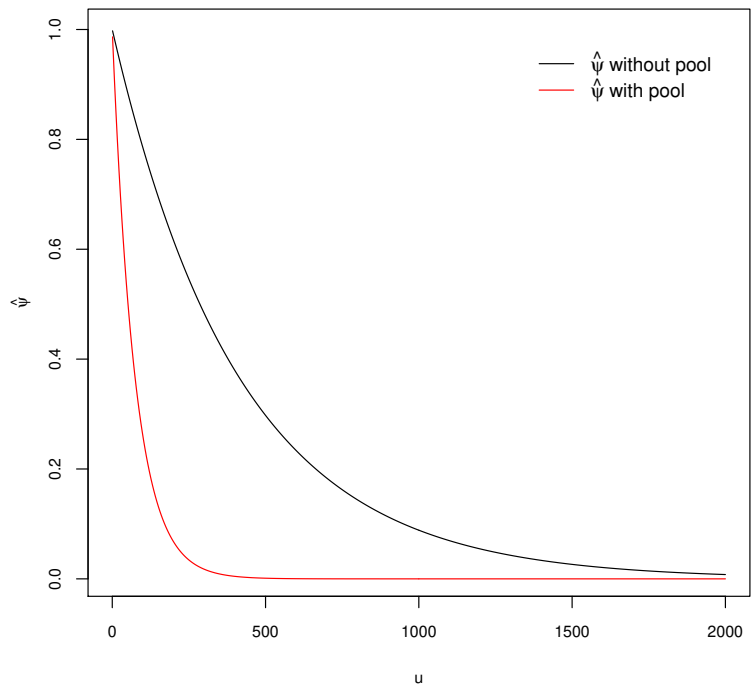}
		\captionsetup{width=0.8\textwidth}
		\centering
		\caption{$\hat{\psi}(u,t)$ as a function of $u$ for an individual pool miner alone in black and within the pool in red.}
		\label{fig_psi_hat_minervspool}
\end{figure} 

Up until a level of initial capital of $u=1255$, it is more profitable for the miner to join the pool, whereas for higher levels of capital the pool fee becomes the main decision driver instead of the ruin considerations. Converted to USD, this amounts to approximately $\$290,971$. Recall that this is akin to the effects of reinsurance, as the miner cedes part of his risk to the pool in exchange of a fixed contractual payment (pool fee).\\

Finally, we investigate the sensitivity of the miner's expected surplus with respect to the key model parameters. In Figure \ref{fig_V_of_f_miner}, the  miner can see for his level of initial capital $u$ whether it is better to join the pool or not, depending on the employed fee $f$. As before, for higher levels of capital, the miner is less willing to accept high fees than a miner with less initial capital. We also observe that the two red lines (miner in the pool with different initial capital $u$) are much closer to each other than the two black lines (miner outside of the pool with different initial capital $u$). This is due to the risk reduction of the miner inside the pool, since he is transferring part of the risk to the pool and getting more frequent rewards. 

\begin{figure}[!ht]
		\includegraphics[width = 300px]{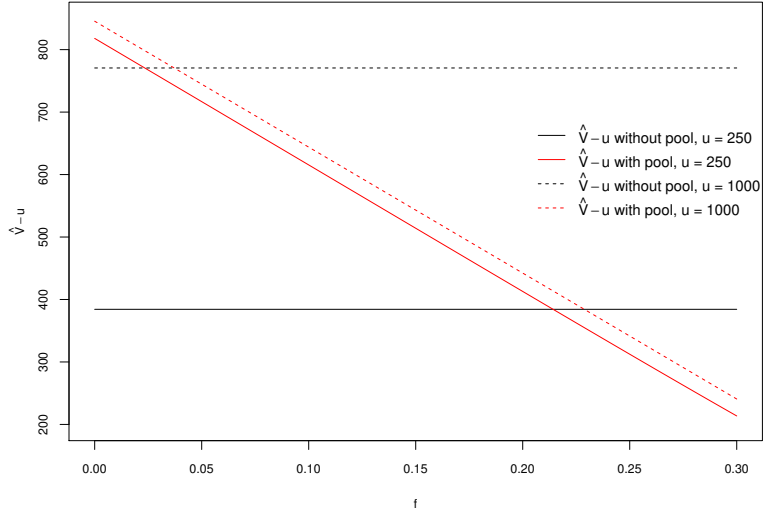}
		\captionsetup{width=0.8\textwidth}
		\centering
		\caption{$\hat{V}(u,t)-u$ as a function of $f$ for an individual pool miner alone in black and within the pool in red.}
		\label{fig_V_of_f_miner}
\end{figure}
Figure \ref{fig_V_of_q_f_miner} shows the level curves of $\hat{V}(u,t)$ with a varying difficulty for the miner's problem $q$ and pool fee $f$. Note that not joining the pool is equivalent to setting the difficulty level equal to the block finding problem level and letting the pool fee be $f=0$. 

\begin{figure}[!ht]
		\includegraphics[width = 300px]{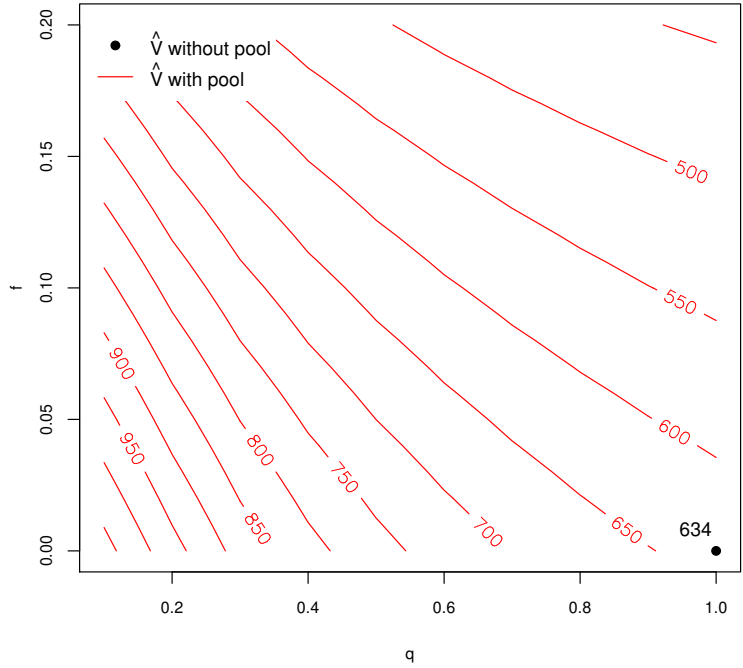}
		\captionsetup{width=0.8\textwidth}
		\centering
		\caption{$\hat{V}(u,t)$ as a function of $q$ and $f$ for an individual pool miner alone in black and within the pool in red.}
		\label{fig_V_of_q_f_miner}
\end{figure}
With such a two-way analysis, the pool can fix an appropriate fee and the miner can see whether he is better off joining the pool for his given level of capital $u$. \\

The miner's decision to join a \textit{pay-per-share} mining pool does not depend on the size of the mining pool. Hence a miner will be indifferent whether to direct her hashpower towards a small or large pool. All that matters is the level of expected profit (decreasing in $f$) and the share of risk transferred to the mining pool (decreasing in $q$). Decentralization will prevail if the preferences, more specifically the risk aversion, of both the pool managers and the individual miners are sufficiently heterogeneous. Note that a situation where a mining pool would control most of the computing power is not desirable for anyone. The blockchain would then be prone to attacks and the associated cryptocurrency would no longer be of value.

\section{Conclusion}\label{secc}
In this paper, we developed a framework for a bitcoin mining pool analysis from a risk and profitability perspective. Given a pay-per-share pooling scheme, we investigated the profitability of a pool under ruin probability considerations, which allows us to derive original results for the pool manager’s expected profit. When describing the pool income process as a stochastic double-sided jump process, one can adapt techniques developed in the actuarial literature for applications in the blockchain universe. In addition, we also looked at the problem from the individual miner’s side, to identify conditions under which it is profitable for her to enter the pool or not.\\

We find that ignoring ruin considerations highly overestimates the expected gain for a pool for small values of initial capital and quantify the required capital level needed for which the ruin aspect becomes negligible. Moreover, we define a trade-off between the main pool defining parameters to set up conditions for optimizing the pool profit for different levels of capital. For an individual miner, pooling has similar effects as a reinsurance treaty for an insurer. We provide a sensitivity analysis that can be helpful for the miner to select the most appropriate pool given his initial parameters.\\

For a randomized time horizon, it was possible to obtain explicit formulas for all quantities of interest. The flexibility of our model enabled to consider deterministic as well as stochastic reward sizes. The established formulas for combinations of exponentials are in fact quite flexible, as any other distribution on the positive halfline can be approximated arbitrary well with such distributions (cf.\ \cite{dufresne2007fitting}). Naturally, some restrictive assumptions were needed to enable the explicit mathematical treatment in this paper, in particular the assumption of independent and identically distributed jump sizes. It will be interesting in future research to look into relaxing these assumptions.\\

The study of the formation of mining pools naturally raises the question of whether they pose a threat to the decentralized nature of blockchain-based applications. We find that the size of the mining pool does not interfere in a miner's decision making process. A miner chooses a mining pool according to the share of risk she wishes to cede and the profit she wishes to make. The preferences of miners and pool managers have been analysed using game theory in Cong et al.\ \cite{Cong2020} and Li et al.\ \cite{li2019mean}. The results of the present paper may serve as concrete risk management tools for miners and pool managers that could also be integrated as value or cost functions within such a game-theoretic approach. 

\section*{Acknowledgements}  
Hans\-j\"{o}rg Albrecher acknowledges financial support from the Swiss National Science Foundation Project 200021\_191984.


\appendix

\section{Abel-Gontcharov polynomials}\label{sec:ag_polynomials}

  Let $U=\{u_i\text{ , }i\geq 1\}$ be a sequence of real non-decreasing numbers. The (unique) family $\{G_{n}(x\vert U)\text{ , }n\geq 0\}$ of \textit{Abel-Gontcharov polynomials} of degree $n$ in $x$ attached to $U$ is defined as follows. Starting with $G_0(x\vert U)=1$, the polynomials $G_n(x\vert U)$ satisfy the differential equations
\begin{equation}\label{eq:DifferentialEquationAGPolynomials}
G_n^{(1)}(x|U)=n\,G_{n-1}(x|{\cal E}U),
\end{equation}
where ${\cal E}U$ is the shifted family $\{u_{i+1}\text{ , }i\geq 1\}$, and with boundary conditions
\begin{equation}\label{eq:Border AGPolynomial}
G_{n}(u_{1}|U)=0, \quad  n\geq 1.
\end{equation}
So, each $G_n$, $n\geq 1$, has the integral representation
\begin{equation}\label{eq:IntegralRepresentationAGPolynomials}
G_{n}(x|U)=n!\int_{u_{1}}^{x}\left[\int_{u_{2}}^{y_{1}}\text{d}y_{2} \ldots \int_{u_{n}}^{y_{n-1}}\text{d}y_{n}\right]\text{d}y_{1}.
\end{equation}
The polynomials $G_n$, $n\geq 1$, can be interpreted in terms of the joint distribution of the order statistics  $(U_{1:n},\ldots,U_{n:n})$ of a sample of $n$ independent uniform random variables on $(0,1)$. Indeed, for $0\leq x \leq u_1 \leq \ldots \leq u_{n} \leq 1$, we have that
\begin{equation*}\label{eq:ProbabilisticInterpretationAGPolynomials}
P[U_{1:n} \leq u_1, \ldots, U_{n:n} \leq u_{n}\, \mbox{ and }\, U_{1:n}\geq x]
= (-1)^n \, G_n(x \vert u_1, \ldots, u_{n}).
\end{equation*}
This last identity is used inside the proof of Theorem \ref{th_f_ruin_time} together with the following property
Note that
\begin{eqnarray}
G_{n}(x|a+bU) = b^n G_n\left((x-a)/b \, |U\right), \quad n\geq 1, \label{eq:LinearTransform}
\end{eqnarray}
Lastly, the numerical evaluation of \eqref{eq:pdf_ruin_time_deterministic_time_horizon} can rely on the recursive relations
\begin{equation}\label{eq:RecursionAGPolynomials}
G_{n}(x|U)=x^{n}-\sum_{k=0}^{n-1} \binom{n}{k} u_{k+1}^{n-k} G_{k}(x|U), \quad n\geq 1.
\end{equation}
Formula \eqref{eq:RecursionAGPolynomials} follows from an Abelian expansion of $x^n$ based on \eqref{eq:DifferentialEquationAGPolynomials}, and \eqref{eq:Border AGPolynomial}.

\section{Proof of Theorem 3.1}\label{sec:proof_deterministic_time_horizon}

The event $\{\tau\in (t,t+dt)\}$ can be viewed conditioned over the values of the process $(N_t)_{t\geq0}$. In other terms,
\begin{equation}
\{\tau\in (t,t+dt)\} = \bigcup_{n=0}^{+\infty} \{\tau\in (t,t+dt)\}\cap\{N_t=n\}.
\end{equation}
We distinguish according to the value of $N_t$. For $N_t=0$, Equation \eqref{eq_ruin_time} can be rewritten as 
\begin{equation}
\tau = \inf\{t\geq 0 ; M_t^d > u/w\},
\end{equation}
which occurs when the $\lceil \frac{u}{w} \rceil ^{th}$ jump of $M_t^d$ occurs at $t$, where $\lceil x \rceil$ denotes the ceiling function. It follows that 
\begin{equation}
\{\tau\in (t,t+dt)\}\cap\{N_t=0\}=\{S_{\lceil \frac{u}{w} \rceil}^d\in  (t,t+dt)\}\cap\{N_t=0\}
\end{equation}
and
\begin{equation}
f_{\tau\mid N_t = 0}(t) = f_{S_{\lceil \frac{u}{w} \rceil}^d}(t),\ t\geq 0.
\end{equation}

In case $N_t\geq 1$, one needs to constrain $\{M_t^d ,\ t\geq 0\}$ so it does not reach $N_{u,s}w/(b-w)+u/(b-w)$ for any time $s<t$ but does so at $t$. Let $(v_n)_{n\geq0}$ is a sequence of integers defined as $v_n = \lceil n(b-w)/w+u/w\rceil$, $n\geq0$. We have
\begin{equation}
\{\tau\in (t,t+dt)\}\cap\{N_t\geq 1\}=\bigcup_{n=1}^{+\infty}\bigcap_{k=1}^n \{T_k \leq S_{v_{k-1}}^d\}\cap\{S_{v_n}^d\in (t,t+dt)\}\cap\{N_t=n\},
\end{equation}
as $M_t^d>\underbrace{N_t}_{=n} (b-w)/w + u/w$ at the time of the fatal jump (and before $t$, $N_t$ reaches each step before the payout process surpasses it). Now
\begin{equation}
\mathbb{P}\left[\{\tau\in (t,t+dt)\}\cap\{N_t\geq 1\} \right] 
= \sum_{n=1}^{+\infty}\mathbb{P}\left[ \bigcap_{k=1}^n \{T_k \leq S^d_{v_{k-1}}\}\cap\{S_{v_n}^d\in (t,t+dt)\} \mid N_t=n\right]\mathbb{P}\left[N_t=n\right].
\end{equation}
By the order statistic property, we get
\begin{equation} \label{eq:app_LTP}
\begin{split}
&\mathbb{P}\left[ \bigcap_{k=1}^n \{T_k \leq S_{v_{k-1}}^d\}\cap\{S_{v_n}^d\in (t,t+dt)\} \mid N_t=n\right]\\
&=\mathbb{P}\left[ \bigcap_{k=1}^n \{U_{k:n} \leq F_t\left(S_{v_{k-1}}^d\right)\}\cap\{S_{v_n}^d\in (t,t+dt)\} \right]\\
&=\mathbb{P}\left[ \bigcap_{k=1}^n \{U_{k:n} \leq F_t\left(S_{v_{k-1}}^d\right)\}\mid S_{v_n}^d\in (t,t+dt) \right]\mathbb{P}\left[S_{v_n}^d\in (t,t+dt)\right]\\
&=\mathbb{E}\left[(-1)^n G_n\left[0\mid F_t\left(S_{v_0}^d\right),\dots,F_t\left(S_{v_{n-1}}^d\right)\right]\mid S_{v_n}^d\in (t,t+dt)\right]\mathbb{P}\left[S_{v_n}^d\in (t,t+dt)\right],
\end{split}
\end{equation}where $(U_{1:n},\dots,U_{n:n})$ denote the order statistics of $n$ i.i.d. unit uniform r.v. and $G_n(.\mid .)$ denote the Abel-Gontcharov polynomials, see Appendix \ref{sec:ag_polynomials} for a short presentation. Now take $F_t(s) = s/t,\ s\leq t$. In virtue of the property \eqref{eq:LinearTransform}, we have
\begin{eqnarray}
G_n\left[0\mid F_t\left(S_{v_0}^d\right),\dots,F_t\left(S_{v_{n-1}}^d\right)\right]&=&G_n\left[0\mid S_{v_0}^d/t,\dots,S_{v_{n-1}}^d/t\right]\nonumber\\
& =& \frac{1}{t^n}G_n\left[0\mid S_{v_0}^d,\dots,S_{v_{n-1}}^d\right].\label{eq:app_ag}
\end{eqnarray}
Inserting that last expression into \eqref{eq:app_LTP} yields
the announced result \eqref{eq:pdf_ruin_time_deterministic_time_horizon}.


\end{document}